\newtheorem{theorem}{Theorem}
\newtheorem{definition}{Definition}
\newtheorem{remark}{Remark}
\newtheorem{lemma}{Lemma}
\begin{document}

\title{Post-quantum Multi-stage Secret Sharing Schemes using Inhomogeneous Linear Recursion and Ajtai's Function}
\author{Jing Yang *, Fang-Wei Fu,
\IEEEcompsocitemizethanks{\IEEEcompsocthanksitem Jing Yang and Fang-Wei Fu are with Chern Institute of Mathematics and LPMC, and Tianjin Key Laboratory of Network and Data Security Technology, Nankai University, Tianjin 300071, P.R.China, Emails: yangjing0804@mail.nankai.edu.cn, fwfu@nankai.edu.cn.
}

\thanks{$^\dag$This research is supported by the National Key Research and Development Program of China (Grant No. 2018YFA0704703), the National Natural Science Foundation of China (Grant No. 12141108 and 61971243), and the Natural Science Foundation of Tianjin (20JCZDJC00610), the Fundamental Research Funds for the Central Universities of China (Nankai University), and Nankai University Zhou Haibing Zhide Foundation.}
\thanks{*Corresponding author}
}

\maketitle

\begin{abstract}
Secret sharing was firstly proposed in 1979 by Shamir and Blakley respectively. To avoid deficiencies of original schemes, researchers presented improvement schemes, among which the multi-secret sharing scheme (MSS) is significant. There are three categories of MSSs, however, we focus on multi-stage secret sharing scheme (MSSS) recovering secrets with any order in this work. By observing inhomogeneous linear recursions ($ILR$s) in the literature, we conclude a general formula and divide $ILR$s into two types according to different variables in them. Utilizing these two kinds of $ILR$s, we propose four verifiable MSSSs with Ajtai's function, which is a lattice-based function. Our schemes have the following advantages. Firstly, our schemes can detect cheat of the dealer and participants, and are multi-use. Secondly, we have several ways to restore secrets. Thirdly, we can turn our schemes into other types of MSSs due to the universality of our method. Fourthly, since we utilize a lattice-based function to mask shares, our schemes can resist the attack from the quantum computer with computational security. Finally, although our schemes need more memory consumption than some known schemes, we need much less time consumption, which makes our schemes more suitable facing limited computing power.
\end{abstract}

\begin{IEEEkeywords}
Multi-secret sharing; multi-stage secret sharing; inhomogeneous linear recursion; Ajtai's function; post-quantum cryptography
\end{IEEEkeywords}

\section{Introduction}
\subsection{Related works}
The secret sharing scheme (SSS) is vital in real life, which was firstly proposed by Shamir \cite{S79} and Blakley \cite{B79} in 1979, respectively. The original SSS can distribute a predefined secret $S$ by assigning diverse values to corresponding parties as their shares. Generally speaking, we assume that there are $n$ parties $\mathcal{P}=\{P_{1},P_{2},\cdots,P_{n}\}$ in an SSS. Notice that only authorized subsets of these $n$ parties can utilize their shares to restore the secret $S$ by running a specific algorithm. All the authorized subsets constitute the access structure of an SSS denoted by $\Gamma \subseteq 2^{\mathcal{P}}$, where $2^{\mathcal{P}}$ is the power set of $\mathcal{P}$. It is desirable that the subset which is not in $\Gamma$ can get no information about the secret $S$. A special case of the general access structure is the $(t,n)$ threshold access structure with $t\leq n$, which consists of any at least $t$ parties of $\mathcal{P}$.

Since there are some disadvantages in the Shamir's SSS, many scholars proposed a lot of novel SSSs with new properties to improve the security level. For instance, some schemes can prevent cheat from the dealer and the parties simultaneously, verify the authenticity of the shares and reconstructed secret, and alter the threshold or the number of parties according to the security requirement. Nevertheless, these schemes above can merely share one secret. Then, the multi-secret sharing scheme (MSS) was proposed to distribute multiple secrets. However, in an MSS, it usually requires that each participant only keeps one share to recover all these secrets, and the size of each share is equal to every secret. Besides, most of the MSSs are proved to have computational security instead of information-theoretic security, which are less secure than the Shamir's scheme.

In general, there are three categories of MSS according to their different processes of secret reconstruction, including the simultaneous multi-secret sharing scheme (SMSS) \cite{YCH04,SC05,ZZZ07,DM081,DM082,DM083,HLC12,DM15,LZZ16,YF201,YF202}, the multi-stage secret sharing scheme recovering secrets with a predefined order (MSSSPO) \cite{HD94,H95,CHY05,M161,M162,M17}, and the multi-stage secret sharing scheme recovering secrets with any order (MSSSAO) \cite{FEA09,HRS13,FGEA14,HHC14,CLXW19}. What's more, depending on the specific practical situation, some category may be preferable. In an SMSS, all of the secrets are reconstructed simultaneously in only one stage. In the MSSSPO and MSSSAO, any authorized subset of the participants can recover only one secret in each stage. In the MSSSPO, different secret reconstruction phases must be executed in a predefined order, that is because the reconstructed secrets can leak some information about the unreconstructed secrets. However, in the MSSSAO, these secret reconstruction phases can be executed in any order, which means that the shared secrets can be recovered independently. Usually, MSSSPO and MSSSAO are called multi-stage secret sharing scheme (MSSS).

Most of the earlier proposed SMSSs \cite{YCH04,SC05,ZZZ07} are simple modifications of the Shamir's SSS \cite{S79}, where the dealer employs polynomials in order to distribute the secrets, and the authorized participants should use Lagrange Interpolation to recover these secrets. In 2008, for the first time, Dehkordi and Mashhadi \cite{DM081} employed the linear feedback shift register (LFSR) sequence instead of polynomials in SMSSs. The introduction of the LFSR sequence makes this SMSS have simpler way to recover the secrets and more efficient to adapt to the situation with limited computing power than presented schemes. Later, some researchers proposed a series of similar SMSSs \cite{DM082,DM083,HLC12,DM15} using the LFSR sequence due to its advantages. In 2016, Liu et al. \cite{LZZ16} pointed that the schemes in \cite{ZZZ07} and \cite{DM082} cannot resist the cheat from the dealer really and presented two improvement schemes to overcome this drawback. Furthermore, we find that the schemes in \cite{DM15} and \cite{DM083} also have the same disadvantage as mentioned above. Hence, based on \cite{LZZ16}, we have proposed two new SMSSs using LFSR public key cryptosystem \cite{YF201} and XTR public key system \cite{YF202}, respectively, which can detect the cheat from the dealer and participants simultaneously, then improve the efficiency further.

When it comes to the multi-stage secret sharing scheme (MSSS), the MSSSPO was proposed firstly by He and Dawson \cite{HD94} with one-way function and public shift technique in 1994. It should be noticed that they pointed that there is no possibility that any kind of MSSS can have unconditional security because of the information theoretic lower bound of \cite{KGH83}. Immediately, Harn \cite{H95} used the Lagrange interpolation polynomial to reduce the public values used in the improved scheme, where the memory consumption of the scheme is greatly reduced. Based on the scheme \cite{HD94}, Chang et al. \cite{CHY05} proposed an improvement scheme, a multi-use multi-stage secret sharing scheme, which reduces the time complexity of the scheme significantly. Here, multi-use means that the shares can be updated without the help of secure channel between the dealer and the corresponding parties, or can be reused directly since the shares are not leaked during the recovery phase. Later, in \cite{M161,M162,M17}, Mashhadi utilized the LFSR sequence to construct the schemes to recover secrets stage by stage, i.e., in a predefined order. Since these schemes increase efficiency further, the LFSR sequence is assumed to be an effective tool for constructing MSSSs.

As for the MSSSAO, in 2009, Fatemi et al. \cite{FEA09} proposed a scheme by all-or-nothing transform approach. This scheme achieves an effect that each secret owns its recovery function and the threshold of this scheme can change arbitrarily. Because of these properties, there is no restriction on the order of these secrets recovery. Then, in 2013, Herranz et al. \cite{HRS13} utilized symmetric encryption and the Shamir's scheme \cite{S79} to construct a multi-stage secret sharing scheme with computational provable security and they provided a formal security analysis for the first time in the literature. Their two schemes can be readily changed into the edition without the trusted third party, which can be applied to more real scenes. Also, in 2013, Fatemi et al. \cite{FGEA14} proved that any unauthorized parties can get some information about the shared secrets in SMSSs and recovered secrets can reduce the entropy of unreconstructed secrets in MSSSPOs by the knowledge of information theory. Therefore, by bilinear map, they presented a multi-use MSSSAO with less public values than before. After that, in 2014, Hsu et al. \cite{HHC14} proposed an ideal and perfect MSSSAO by using monotone span programs based on connectivity of graphs. However, this scheme does not have verifiability. In 2019, Chen et al. \cite{CLXW19} used information theory to construct a verifiable threshold multi-secret sharing scheme with different stages. They release different numbers of public values to control the threshold of every secret so that this scheme can recover every secret independently.

It can be seen from the previous description that all the MSSs mentioned so far now, no matter what category they belong to, they cannot resist the attack from the quantum computer. Because the knowledge used in these scheme is one-way hash functions, two-variable one-way functions, and other assumptions which can be solved by the quantum computer in polynomial time. In 1994, Shor \cite{S94} introduced the first quantum algorithm for factoring and computation of discrete logarithms, which made researchers pay more attention to post-quantum cryptography mainly consisting of code-based cryptography, lattice-based cryptography, multivariate public key cryptography and so on. Among them, because the lattice-based cryptography makes use of simple linear computation and has small size of the public key, it seems to play a more and more vital role than the other two methods in the post-quantum cryptography. Generally speaking, the security of lattice-based cryptography is built on the lattice problems in the worst case. Up to now, there is no algorithm proved to threat the security of lattice-based cryptography. Therefore, lattice-based cryptography is considered to be quantum resistant.

In a groundbreaking work \cite{A96}, for lattice problems, Ajtai gave the first lattice reduction from the worst-case to average-case, and therefore he proposed the first cryptographic object with a security proof supposing the intractability of well-studied computational problems of lattices. Specially, Ajtai provided a general way to construct a class of cryptographic one-way functions whose security is based on the worst-case hardness of lattice problems, which are still widely used in lattice-based cryptography. Afterwards, Goldreich et al. \cite{GGH96} showed that the functions in \cite{A96} are with collision resistance which is assumed to be more secure than one-wayness. Because lattice-based cryptography can resist the attack from the quantum computer, it is natural to use lattice-based cryptography devising the novel secret sharing schemes.

In 2012, Bansarkhani and Meziani \cite{BM12} utilized lattice-based one-way hash function to design an $(n,n)$ threshold secret sharing scheme with verifiability, which means that it can detect the deception by both the dealer and parties in this scheme. The security of this scheme is based on the hardness of $n^{c}$-approximate shortest vector problem (SVP), where $n$ is the dimension of the lattice used in this scheme and $c$ is a positive constant. Moreover, due to the usage of matrix vector operations in the verification phase of this scheme, compared with exponentiation operations in the traditional latticed-based secret sharing schemes, it reduces the computational complexity greatly. Then, in 2017, Pilaram and Eghlidos \cite{PE17} presented a $(t,n)$ threshold MSSSAO using Ajtai's function \cite{A96} . In this scheme, the shared secrets are vectors in a lattice with dimension $t$ and the dealer keeps the basis of the lattice secret. Since Ajtai's function is assumed to be quantum resistant, this MSSSAO provides computational security against the quantum computer at present. What's more, it has verifiability and is multi-use just by using simple linear lattice computation with high efficiency.

\subsection{Our results}
The motivation of our paper is to design an efficient verifiable multi-stage secret sharing scheme with any order. By using the inhomogeneous linear recursion ($ILR$) \cite{B02} and Ajtai's function \cite{A96}, we propose four verifiable MSSSAOs with computational security against the quantum computer.

From the analysis in the last subsection, we find that the LFSR sequence is an effective tool to construct the secret sharing scheme. In general, the LFSR sequence includes homogeneous and inhomogeneous linear recursions respectively. In this paper, we propose a general formula of $ILR$s used in the literature showing that there are two types of $ILR$s, i.e., Type-$t$ and Type-$l$ $ILR$s, which can be applied to SMSSs, MSSSPOs, and MSSSAOs. Besides, by using a standard model of MSSSAOs, we design four verifiable MSSSAOs with some special cases of Type-$t$ and Type-$l$ $ILR$s as examples to show the generality of our method.

The main contributions of our paper are as follows:

(1) We propose a general formula of some $ILR$s used in the literature;

(2) Our four schemes are verifiable, dynamic and reusable;

(3) We have more than one way to restore the shared secrets;

(4) Our method can be readily applied to SMSSs and MSSSPOs;

(5) Our MSSSAOs can resist the attack form quantum computation;

(6) Although our schemes need more memory consumption than some known schemes, we need much less time consumption.

The paper is organized as follows. In Section \uppercase\expandafter{\romannumeral 2}, we introduce the preliminaries used in this paper. In Sections \uppercase\expandafter{\romannumeral 3}, \uppercase\expandafter{\romannumeral 4}, \uppercase\expandafter{\romannumeral 5} and \uppercase\expandafter{\romannumeral 6}, we propose four new verifiable multi-stage secret sharing schemes recovering secrets with any order. Then we give security analysis in Section \uppercase\expandafter{\romannumeral 7} and performance analysis in Section \uppercase\expandafter{\romannumeral 8}. Finally, we conclude our paper in Section \uppercase\expandafter{\romannumeral 9}.

\section{Preliminaries}
\label{sec:1}
In this section, we introduce the knowledge of linear recursions, lattices and Ajtai's function, and the multi-stage secret sharing scheme.

\subsection{Inhomogeneous linear recursion}
In this subsection, we simply introduce the homogeneous and inhomogeneous linear recursion \cite{B02,LN97}.

Firstly, we give the definition of linear recurring sequence. Let $\mathbb{F}_{q}$ be a finite field with $q$ elements, where $q$ is a prime.

\begin{definition}
Assume that $k$ is a positive integer, and $c,a_{1},a_{2},\cdots,a_{k}$ are predefined constants over $\mathbb{F}_{q}$. Then, we call a sequence $\{u_{i}\}_{i\geq 0}$ over $\mathbb{F}_{q}$ satisfying the following relation
\begin{equation}
u_{i+k}+a_{1}u_{i+k-1}+a_{2}u_{i+k-2}+\cdots+a_{k}u_{i}=c \quad(i\geq 0)
\end{equation}
a $k$th-order linear recurring sequence over $\mathbb{F}_{q}$.
\end{definition}

When the relation (1) is determined, then $u_{0},u_{1},\cdots,u_{k-1}$ can define the whole sequence. Therefore, the values of the first $k$ terms are called initial values of this $k$th-order linear recurring sequence. When $c$=0, the relation (1) is considered to be homogeneous. When $c\neq0$, the relation (1) is considered as an inhomogeneous linear recursion ($ILR$) relation.

Now, let $\{u_{i}\}_{i\geq 0}$ satisfy a $k$th-order $ILR$ relation (1) over $\mathbb{F}_{q}$. By using (1) with $i$ replaced by $i+1$, we get
\begin{equation}
u_{i+k+1}+a_{1}u_{i+k}+a_{2}u_{i+k-1}+\cdots+a_{k}u_{i+1}=c \quad(i\geq 0).
\end{equation}
Then subtracting the relation (1) from the relation (2), we obtain
\begin{equation}
u_{i+k+1}+b_{1}u_{i+k}+b_{2}u_{i+k-1}+\cdots+b_{k}u_{i+1}+b_{k+1}u_{i}=0 \quad(i\geq 0),
\end{equation}
where $b_{1}=a_{1}-1$, $b_{j}=a_{j}-a_{j-1}$ for $j=2,\cdots,k$, and $b_{k+1}=-a_{k}$. So a $k$th-order inhomogeneous linear recurring sequence $\{u_{i}\}_{i\geq 0}$ can be interpreted as a $(k+1)$th-order homogeneous linear recurring sequence over $\mathbb{F}_{q}$. Therefore, the results about homogenous linear recurring sequences sometimes can be applied to the inhomogeneous case as well.

Then, we mainly talk about the consequences of the homogeneous case in the following part, and try to apply similar methods to the inhomogeneous case.

\begin{definition}
For a $k$th-order homogeneous linear recurring sequence $\{u_{i}\}_{i\geq 0}$ over $\mathbb{F}_{q}$,

(1) an expression of the following form
$$U(x)=\sum_{i=0}^{\infty}u_{i}x^{i}$$
is called its generating function over $\mathbb{F}_{q}$;

(2) the polynomial $$f(x)=x^{k}+a_{1}x^{k-1}+\cdots+a_{k} \in \mathbb{F}_{q}[x]$$
is called its characteristic polynomial.

\end{definition}


Next, we introduce a lemma \cite{LN97} before we give the main theorem.

\begin{lemma}
Suppose $\{u_{i}\}_{i\geq 0}$ is a $k$th-order homogeneous linear recurring sequence over $\mathbb{F}_{q}$, and $f(x)=(x-\alpha)^{k}$ is its characteristic polynomial where $\alpha\in \mathbb{F}_{q}$. Then for the generating function $U(x)=\sum_{i=0}^{\infty}u_{i}x^{i}$, the identity
$$U(x)=\dfrac{h(x)}{(1-\alpha x)^{k}},$$
holds, where $h(x)$ is a polynomial over $\mathbb{F}_{q}$ with $\deg (h(x))<k$.

Furthermore, $u_{i}=p(i)\alpha^{i}$ and $p(x)$ is a polynomial over $\mathbb{F}_{q}$ with degree at most $k-1$.

\end{lemma}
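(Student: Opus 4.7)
The plan is to use the standard generating-function / reciprocal-polynomial approach. First, I would introduce the reciprocal polynomial
$$f^{*}(x)=x^{k}f(1/x)=1+a_{1}x+a_{2}x^{2}+\cdots+a_{k}x^{k},$$
and observe that since $f(x)=(x-\alpha)^{k}$, one has $f^{*}(x)=(1-\alpha x)^{k}$.

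Next, I would compute the formal power series product $f^{*}(x)U(x)$. For every $n\geq k$, the coefficient of $x^{n}$ in this product equals $u_{n}+a_{1}u_{n-1}+\cdots+a_{k}u_{n-k}$, which is precisely the left-hand side of the homogeneous recurrence (relation (1) with $c=0$) at $i=n-k$, and therefore vanishes. Consequently $f^{*}(x)U(x)=h(x)$ is a polynomial in $\mathbb{F}_{q}[x]$ of degree less than $k$, which rearranges to the desired identity $U(x)=h(x)/(1-\alpha x)^{k}$.

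For the closed-form description of $u_{i}$, I would invoke the generalized binomial expansion
$$(1-\alpha x)^{-k}=\sum_{i\geq 0}\binom{i+k-1}{k-1}\alpha^{i}x^{i},$$
valid as a formal power series over any field. Writing $h(x)=\sum_{j=0}^{k-1}h_{j}x^{j}$ and extracting the coefficient of $x^{i}$ yields
$$u_{i}=\sum_{j=0}^{k-1}h_{j}\binom{i+k-1-j}{k-1}\alpha^{i-j}.$$
For each fixed $j$ the factor $\binom{i+k-1-j}{k-1}$ is a polynomial in $i$ of degree $k-1$, so after absorbing $\alpha^{-j}$ into the coefficients the right-hand side collapses to $p(i)\alpha^{i}$ for some polynomial $p(x)\in\mathbb{F}_{q}[x]$ with $\deg p\leq k-1$.

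The main obstacle I foresee is ensuring that $u_{i}=p(i)\alpha^{i}$ holds for \emph{every} $i\geq 0$ rather than only for $i$ large, together with subtleties when $\alpha=0$ or when $\operatorname{char}(\mathbb{F}_{q})$ is small relative to $k$. A clean alternative that sidesteps these issues is a dimension count: the space of sequences satisfying the $k$th-order recurrence is $k$-dimensional over $\mathbb{F}_{q}$ (parametrised by the initial values $u_{0},\ldots,u_{k-1}$); the $k$ sequences $\{\alpha^{i}\},\{i\alpha^{i}\},\ldots,\{i^{k-1}\alpha^{i}\}$ each satisfy the recurrence, because the shift-minus-$\alpha$ operator drops the polynomial factor's degree by one and so $k$ applications annihilate any such sequence of degree less than $k$; and they are linearly independent via a Vandermonde argument on the initial segment of length $k$. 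Matching dimensions then yields $u_{i}=p(i)\alpha^{i}$ for every $i\geq 0$ with $\deg p\leq k-1$, completing the proof.
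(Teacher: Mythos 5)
The paper never actually proves this lemma: it is imported from \cite{LN97} without proof, so the comparison here is with the standard textbook argument rather than with anything internal to the paper. Your first half is exactly that standard argument and is correct: multiplying $U(x)$ by the reciprocal polynomial $f^{*}(x)=x^{k}f(1/x)=(1-\alpha x)^{k}$ kills the coefficient of $x^{n}$ for every $n\geq k$ by the recurrence, so $f^{*}(x)U(x)=h(x)$ is a polynomial with $\deg h<k$, as required.

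The problem is in the ``furthermore'' half. The two subtleties you flag are genuine restrictions, not removable technicalities, and your dimension-count ``clean alternative'' does not sidestep them. If $\operatorname{char}\mathbb{F}_{q}=p\leq k-1$, then $\binom{i+k-1-j}{k-1}$ is not a polynomial function of $i$ over $\mathbb{F}_{q}$ (you cannot divide by $(k-1)!$), and in the dimension count the sequences $\{i^{j}\alpha^{i}\}_{0\leq j\leq k-1}$ are not linearly independent: $i^{p}$ and $i$ define the same function into $\mathbb{F}_{q}$, and the Vandermonde determinant $\prod_{0\leq a<b\leq k-1}(b-a)$ you would invoke is divisible by $p$ whenever $k>p$. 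Indeed the conclusion itself is false there: over $\mathbb{F}_{2}$ with $f(x)=(x-1)^{3}$, the recurrence $u_{i+3}=u_{i+2}+u_{i+1}+u_{i}$ admits the period-$4$ solution $0,0,1,1,0,0,1,1,\ldots$, whereas any sequence of the form $g(i)$ with $g\in\mathbb{F}_{2}[x]$ has period dividing $2$. Likewise $\alpha=0$ with $k\geq 2$ must be excluded, since then the recurrence only forces $u_{i}=0$ for $i\geq k$ and the polynomial form fails. So your argument is complete only under the tacit hypotheses $\alpha\neq 0$ and $k\leq\operatorname{char}\mathbb{F}_{q}$; these should be stated rather than claimed away, and with them in place both of your routes (binomial expansion of $(1-\alpha x)^{-k}$, or the $k$-dimensional solution-space count) do go through. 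They are harmless for the paper's use of the lemma in Theorem 1, where the root is $\alpha=1$ and $q$ is a large prime, so the multiplicity $t+2l$ is far below the characteristic.
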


In fact, the method of generating functions in Lemma 1 can be applied to the $ILR$ satisfying the relation (1) sometimes. The applicability of the method corresponds to the specific form of $c$. Simply speaking, the method is feasible if the terms involving $c$ can be handled properly in a special way, which we will illustrate this process in the proof of Theorem 1 later.

Then we conclude a general formula of the $ILR$ relations used in the literature, which we will use in subsequent sections.

\begin{theorem}
We assume that a sequence $\{u_{i}\}_{i\geq 0}$ over $\mathbb{F}_{q}$ satisfies the $ILR$ relation as follows:

\begin{equation}
[ILR]=\left\{
\begin{aligned}
& u_{0}=c_{0},u_{1}=c_{1},\cdots,u_{t+l-2}=c_{t+l-2},\\
&\sum_{j=0}^{t+l-1}\left( {\begin{array}{*{20}{ccc}}
	t+l-1\\
	j
	\end{array}} \right)(-1)^{j}u_{i+t-(j-l+1)}=\frac{(i)_{l}}{l!}c \quad (i\geq 0),
\end{aligned}
\right.
\end{equation}
where $c,c_{0},c_{1},\cdots,c_{t+l-2}$ are given elements in $\mathbb{F}_{q}$, and $t,l$ are nonnegative integers. Here, we define that $(i)_{l}=i\times (i-1)\times \cdots \times (i-l+1)$. Therefore, $u_{i}=p(i)$, where $p(x)=A_{0}+A_{1}x+\cdots+A_{t+2l-1}x^{t+2l-1}$ is a polynomial over $\mathbb{F}_{q}$.

\end{theorem}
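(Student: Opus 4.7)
My plan is to recognize the left-hand side of the $ILR$ as an iterated forward difference and then reduce the problem to the homogeneous setting already handled by Lemma 1.

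First I would observe that with $n=t+l-1$ and the reindexing $k=n-j$, the binomial sum on the left-hand side of (4) is exactly the $n$th forward difference
\begin{equation*}
\sum_{j=0}^{t+l-1}\binom{t+l-1}{j}(-1)^{j}u_{i+t+l-1-j}=\Delta^{\,t+l-1}u_{i},
\end{equation*}
where $\Delta u_{i}=u_{i+1}-u_{i}$. On the right-hand side, $(i)_{l}/l!=\binom{i}{l}$, which is a polynomial in $i$ of degree $l$, and a standard identity for binomial coefficients gives $\Delta\binom{i}{l}=\binom{i}{l-1}$, so iterating yields $\Delta^{l+1}\binom{i}{l}=0$. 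Thus the recurrence (4) can be rewritten in the compact form $\Delta^{t+l-1}u_{i}=\binom{i}{l}c$.

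Next I would apply $\Delta^{l+1}$ to both sides of this compact equation. The right-hand side is annihilated, so
\begin{equation*}
\Delta^{\,t+2l}\,u_{i}=0\qquad(i\geq 0).
\end{equation*}
Expanding $\Delta^{t+2l}u_{i}=\sum_{k=0}^{t+2l}\binom{t+2l}{k}(-1)^{t+2l-k}u_{i+k}$ shows that $\{u_{i}\}_{i\geq 0}$ is a $(t+2l)$th-order homogeneous linear recurring sequence over $\mathbb{F}_{q}$ whose characteristic polynomial is $f(x)=(x-1)^{t+2l}$. This is precisely the setting of Lemma 1 with $\alpha=1$ and $k=t+2l$, so I could invoke it directly to conclude that $u_{i}=p(i)\cdot 1^{i}=p(i)$ for some polynomial $p(x)\in\mathbb{F}_{q}[x]$ of degree at most $t+2l-1$, which is the asserted form $p(x)=A_{0}+A_{1}x+\cdots+A_{t+2l-1}x^{t+2l-1}$.

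The two technical points I expect to need care are: (i) verifying that the binomial-coefficient identity $\Delta\binom{i}{l}=\binom{i}{l-1}$ holds over $\mathbb{F}_{q}$ (it does, since it is a polynomial identity with integer coefficients that reduces correctly mod $q$ as long as $l<q$, which will be assumed in practice), and (ii) making explicit that the $t+l-1$ initial values $c_{0},c_{1},\ldots,c_{t+l-2}$ together with the $ILR$ relation supply enough data to determine all coefficients $A_{0},\ldots,A_{t+2l-1}$ uniquely, rather than leaving a free family of solutions. The remaining steps are mechanical manipulations of finite differences and binomial sums, so the main conceptual obstacle is really just the recognition that (4) is a disguised finite-difference equation $\Delta^{t+l-1}u_{i}=\binom{i}{l}c$; once that is in hand, one extra application of $\Delta^{l+1}$ reduces the inhomogeneous problem to a homogeneous one already covered by Lemma 1.
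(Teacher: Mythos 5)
Your proposal is correct, and it reaches the conclusion by a recognizably different mechanism than the paper, although both arguments funnel into the same key lemma (Lemma 1 with $\alpha=1$, $k=t+2l$). The paper works on the generating-function side: it multiplies $U(x)=\sum_{i\ge 0}u_i x^i$ by $\sum_{j}\binom{t+l-1}{j}(-1)^j x^j=(1-x)^{t+l-1}$, sums the inhomogeneous tail explicitly as $cx^{t+2l-1}/(1-x)^{l+1}$ using $(i)_l=0$ for $i<l$, and arrives at $U(x)=\bigl(h_{t,l}(x)(1-x)^{l+1}+cx^{t+2l-1}\bigr)/(1-x)^{t+2l}$ with numerator of degree at most $t+2l-1$, from which Lemma 1 gives $u_i=p(i)$. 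You instead stay on the sequence side: you read the left-hand side as the forward difference $\Delta^{t+l-1}u_i$, note that the right-hand side $\binom{i}{l}c$ is annihilated by $\Delta^{l+1}$ via Pascal's rule, and conclude $\Delta^{t+2l}u_i=0$, i.e.\ the sequence is homogeneous of order $t+2l$ with characteristic polynomial $(x-1)^{t+2l}$, which is exactly the hypothesis of Lemma 1. Your route is slightly cleaner and more conceptual (no series summation, and the ``disguised finite-difference equation'' viewpoint makes the degree bound $t+2l-1$ transparent), while the paper's computation has the side benefit of exhibiting the numerator polynomial explicitly, i.e.\ how the initial values $c_0,\dots,c_{t+l-2}$ and $c$ enter the generating function. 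Your two caveats are fine: the invertibility of $l!$ is already implicit in the statement of the theorem itself (which divides by $l!$), and uniqueness of the coefficients $A_0,\dots,A_{t+2l-1}$ is not needed for the theorem as stated (only existence of $p$ is claimed); both your argument and the paper's equally inherit the implicit assumption that Lemma 1's polynomial representation is applicable, which holds since $q$ is a large prime compared with $t+2l$.
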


\begin{proof}
At first, we notice that $(i)_{l}=0$ when $i<l$. Then, we get the following equation:
\begin{footnotesize}
\begin{align*}
\bigg(\sum_{j=0}^{t+l-1}\left( {\begin{array}{*{20}{ccc}}
	t+l-1\\
	j
	\end{array}} \right)(-1)^{j}x^{j}\bigg)\sum_{i=0}^{\infty}u_{i}x^{i}
    &=u_{0}+(u_{1}-(t+l-1)u_{0})x+\cdots\\
    &+\bigg(\sum_{j=0}^{t+l-2}\left( {\begin{array}{*{20}{ccc}}
	t+l-1\\
	j
	\end{array}} \right)(-1)^{j}u_{t-j+l-2}\bigg)x^{t+l-2}\\
    &+\bigg(\sum_{j=0}^{t+l-1}\left( {\begin{array}{*{20}{ccc}}
	t+l-1\\
	j
	\end{array}} \right)(-1)^{j}u_{t-j+l-1}\bigg)x^{t+l-1}\\
    &+\bigg(\sum_{j=0}^{t+l-1}\left( {\begin{array}{*{20}{ccc}}
	t+l-1\\
	j
	\end{array}} \right)(-1)^{j}u_{t-j+l}\bigg)x^{t+l}+\cdots\\
    &\overset{(4)}{=}h_{t,l}(x)+cx^{t+2l-1}\bigg(1+\frac{(l+1)_{l}}{l!}x+\frac{(l+2)_{l}}{l!}x^{2}+\cdots\bigg)\\
    &=h_{t,l}(x)+cx^{t+2l-1}\cdot\frac{1}{(1-x)^{l+1}}\\
    &=\dfrac{h_{t,l}(x)(1-x)^{l+1}+cx^{t+2l-1}}{(1-x)^{l+1}},
\end{align*}
\end{footnotesize}

\noindent where
\begin{footnotesize}
$h_{t,l}(x)=u_{0}+(u_{1}-(t+l-1)u_{0})x+\cdots+\bigg(\sum_{j=0}^{t+l-2}\left( {\begin{array}{*{20}{ccc}}
	t+l-1\\
	j
	\end{array}} \right)(-1)^{j}u_{t-j+l-2}\bigg)x^{t+l-2}$
\end{footnotesize}
is a polynomial of $x$ with degree at most $(t+l-2)$, and $t,l$ are nonnegative integers. Hence,
$$(1-x)^{t+l-1}\sum_{i=0}^{\infty}u_{i}x^{i}=\dfrac{h_{t,l}(x)(1-x)^{l+1}+cx^{t+2l-1}}{(1-x)^{l+1}},$$
$$\sum_{i=0}^{\infty}u_{i}x^{i}=\dfrac{h_{t,l}(x)(1-x)^{l+1}+cx^{t+2l-1}}{(1-x)^{t+2l}}.$$
Then, according to Lemma 1, we obtain that $u_{i}=p(i)$, where $p(x)$ is a polynomial over $\mathbb{F}_{q}$ with degree at most $t+2l-1$. That is to say, $p(x)=A_{0}+A_{1}x+\cdots+A_{t+2l-1}x^{t+2l-1}$ is a polynomial over $\mathbb{F}_{q}$.

\end{proof}

\begin{theorem}
We assume that a sequence $\{u_{i}\}_{i\geq 0}$ over $\mathbb{F}_{q}$ satisfies the $ILR'$ relation as follows:

\begin{equation}
[ILR']=\left\{
\begin{aligned}
& u_{0}=c_{0},u_{1}=c_{1},\cdots,u_{t+l-2}=c_{t+l-2},\\
&\sum_{j=0}^{t+l-1}\left( {\begin{array}{*{20}{ccc}}
	t+l-1\\
	j
	\end{array}} \right)u_{i+t-(j-l+1)}=(-1)^{i}\frac{(i)_{l}}{l!}c \quad (i\geq 0),
\end{aligned}
\right.
\end{equation}
where $c,c_{0},c_{1},\cdots,c_{t+l-2}$ are given elements in $\mathbb{F}_{q}$, and $t,l$ are nonnegative integers. Here, we define that $(i)_{l}=i\times (i-1)\times \cdots \times (i-l+1)$. Therefore, $u_{i}=(-1)^{i}p(i)$, where $p(x)=A_{0}+A_{1}x+\cdots+A_{t+2l-1}x^{t+2l-1}$ is a polynomial over $\mathbb{F}_{q}$.

\end{theorem}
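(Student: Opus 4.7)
The plan is to reduce Theorem 2 to Theorem 1 by the substitution $v_i := (-1)^i u_i$. This is natural because the only structural differences between the two recursions are (i) the removal of the $(-1)^j$ factor inside the sum and (ii) the appearance of $(-1)^i$ on the right-hand side, and both features are controlled by a single alternating sign.

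First I would plug $u_m = (-1)^m v_m$ into the left-hand side of the $ILR'$ relation. Using the shift exponent $m = i+t-(j-l+1)$, I get a common factor $(-1)^{i+t+l-1}$ together with a remaining $(-1)^{-j} = (-1)^j$ inside the sum. After dividing through by $(-1)^{i+t+l-1}$, the right-hand side $(-1)^i \frac{(i)_l}{l!}c$ becomes $(-1)^{t+l-1}\frac{(i)_l}{l!}c$, which is simply $\frac{(i)_l}{l!}\tilde c$ for the new constant $\tilde c := (-1)^{t+l-1}c \in \mathbb{F}_q$. In other words, $\{v_i\}_{i\geq 0}$ satisfies exactly the $ILR$ relation of Theorem 1 with inhomogeneous term $\tilde c$, and with initial data $v_i = (-1)^i c_i$ for $0\leq i \leq t+l-2$, which are perfectly valid elements of $\mathbb{F}_q$.

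Next I would invoke Theorem 1 directly on $\{v_i\}$ to conclude that $v_i = p(i)$ for some polynomial $p(x) = A_0 + A_1 x + \cdots + A_{t+2l-1}x^{t+2l-1} \in \mathbb{F}_q[x]$ of degree at most $t+2l-1$. Translating back via $u_i = (-1)^i v_i$ then yields $u_i = (-1)^i p(i)$, which is the claimed form.

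I do not anticipate any genuine obstacle: the argument is essentially a one-line change of variables, and the generating-function work has already been carried out in the proof of Theorem 1. The only step that requires some care is the sign bookkeeping, specifically verifying that $(-1)^{i+t-(j-l+1)} = (-1)^{i+t+l-1}(-1)^j$ so that the absorbed factor is independent of the summation index $j$; once this is checked, the reduction goes through mechanically and Theorem 1 delivers the conclusion.
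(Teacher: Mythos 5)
Your reduction is correct: with $v_i:=(-1)^i u_i$ and $m=i+t-(j-l+1)=i+t-j+l-1$ the factor $(-1)^m$ splits as $(-1)^{i+t+l-1}(-1)^{j}$, so after cancelling $(-1)^{i+t+l-1}$ the sequence $\{v_i\}$ satisfies relation (4) with the constant $\tilde c=(-1)^{t+l-1}c$ and initial values $(-1)^i c_i$, and Theorem 1 (whose hypotheses allow arbitrary constants in $\mathbb{F}_q$) gives $v_i=p(i)$ with $\deg p\le t+2l-1$, hence $u_i=(-1)^i p(i)$. This is, however, a genuinely different route from the paper's: the paper omits the proof of Theorem 2 with the remark that it is ``similar to Theorem 1,'' meaning one is expected to redo the generating-function computation, i.e.\ multiply $U(x)=\sum_i u_i x^i$ by $\sum_j\binom{t+l-1}{j}x^j=(1+x)^{t+l-1}$, sum the alternating inhomogeneous terms into $c\,x^{t+2l-1}(1+x)^{-(l+1)}$ up to sign, arrive at $U(x)$ with denominator $(1+x)^{t+2l}$, and invoke Lemma 1 with $\alpha=-1$ to read off $u_i=(-1)^i p(i)$. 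Your change-of-variables argument avoids repeating that computation entirely and makes the relationship between the two recursions transparent (the only price being the sign bookkeeping you already verified), whereas the paper's intended argument stays self-contained within the generating-function framework and exhibits the rational form of $U(x)$ explicitly, which is what Lemma 1 is tailored to. Either proof is acceptable; yours is arguably the cleaner way to present Theorem 2 given that Theorem 1 is already established.
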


The proof of Theorem 2 is similar to the Theorem 1, so we omit this part here.\\


In fact, we find that the $ILR$ relations in Theorem 1 and Theorem 2 are general forms of the formulas used in some proposed schemes.

(1) Next, we talk about some special forms of (4) and (5) where $t$ is variable.

1) Firstly, as for the relation (4), when $l=0$, the $ILR$ relation over $\mathbb{F}_{q}$ is as follows which is used in \cite{M162} and \cite{DM15}:
$$
[ILR]_{l=0}=\left\{
\begin{aligned}
& u_{0}=c_{0},u_{1}=c_{1},\cdots,u_{t-2}=c_{t-2},\\
&\sum_{j=0}^{t-1}\left( {\begin{array}{*{20}{ccc}}
	t-1\\
	j
	\end{array}} \right)(-1)^{j}u_{i+t-j-1}=c \quad (i\geq 0),
\end{aligned}
\right.
$$
where $c,c_{0},c_{1},\cdots, c_{t-2}$ are given elements in $\mathbb{F}_{q}$. Therefore, $u_{i}=p(i)$, where $p(x)=A_{0}+A_{1}x+\cdots+A_{t-1}x^{t-1}$ is a polynomial over $\mathbb{F}_{q}$.

Then, when $l=1$, the $ILR$ relation over $\mathbb{F}_{q}$ is as follows which is used in \cite{DM083}:
$$
[ILR]_{l=1}=\left\{
\begin{aligned}
& u_{0}=c_{0},u_{1}=c_{1},\cdots,u_{t-1}=c_{t-1},\\
&\sum_{j=0}^{t}\left( {\begin{array}{*{20}{ccc}}
	t\\
	j
	\end{array}} \right)(-1)^{j}u_{i+t-j}=ci \quad (i\geq 0),
\end{aligned}
\right.
$$
where $c,c_{0},c_{1},\cdots, c_{t-1}$ are given elements in $\mathbb{F}_{q}$. Therefore, $u_{i}=p(i)$, where $p(x)=A_{0}+A_{1}x+\cdots+A_{t+1}x^{t+1}$ is a polynomial over $\mathbb{F}_{q}$.

2) Secondly, as for the relation (5), when $l=0$, the $ILR'$ relation over $\mathbb{F}_{q}$ is as follows which is used in \cite{M17} and \cite{DM15}:
$$
[ILR']_{l=0}=\left\{
\begin{aligned}
& u_{0}=c_{0},u_{1}=c_{1},\cdots,u_{t-2}=c_{t-2},\\
&\sum_{j=0}^{t-1}\left( {\begin{array}{*{20}{ccc}}
	t-1\\
	j
	\end{array}} \right)u_{i+t-j-1}=(-1)^{i}c \quad (i\geq 0),
\end{aligned}
\right.
$$
where $c,c_{0},c_{1},\cdots,c_{t-2}$ are given elements in $\mathbb{F}_{q}$. Therefore, $u_{i}=(-1)^{i}p(i)$, where $p(x)=A_{0}+A_{1}x+\cdots+A_{t-1}x^{t-1}$ is a polynomial over $\mathbb{F}_{q}$.

Then, when $l=1$, the $ILR'$ relation over $\mathbb{F}_{q}$ is as follows which is used in \cite{DM083}:
$$
[ILR']_{l=1}=\left\{
\begin{aligned}
& u_{0}=c_{0},u_{1}=c_{1},\cdots,u_{t-1}=c_{t-1},\\
&\sum_{j=0}^{t}\left( {\begin{array}{*{20}{ccc}}
	t\\
	j
	\end{array}} \right)u_{i+t-j}=(-1)^{i}ci \quad (i\geq 0),
\end{aligned}
\right.
$$
where $c,c_{0},c_{1},\cdots, c_{t-1}$ are given elements in $\mathbb{F}_{q}$. Therefore, $u_{i}=(-1)^{i}p(i)$, where $p(x)=A_{0}+A_{1}x+\cdots+A_{t+1}x^{t+1}$ is a polynomial over $\mathbb{F}_{q}$.

\begin{remark}
From the formulas above, we find that when $l$ increases by 1, the degree of $p(i)$ used in the general term of these $ILR$ relations increases by 2. For instance, when $l=2$, the $[ILR]_{l=2}$ is as follows:
$$
[ILR]_{l=2}=\left\{
\begin{aligned}
& u_{0}=c_{0},u_{1}=c_{1},\cdots,u_{t}=c_{t},\\
&\sum_{j=0}^{t+1}\left( {\begin{array}{*{20}{ccc}}
	t+1\\
	j
	\end{array}} \right)(-1)^{j}u_{i+t+1-j}=\frac{i(i-1)}{2}c \quad (i\geq 0),
\end{aligned}
\right.
$$
where $c,c_{0},c_{1},\cdots, c_{t}$ are given elements in $\mathbb{F}_{q}$. Therefore, $u_{i}=p(i)$, where $p(x)=A_{0}+A_{1}x+\cdots+A_{t+3}x^{t+3}$ is a polynomial over $\mathbb{F}_{q}$.
\end{remark}

\begin{remark}
When $l=1$ and $c=0$, the $ILR$ relation (4) degenerates into a homogeneous linear recursion relation in the following form which is used in \cite{DM082,HLC12,LZZ16,M161}:
$$
[ILR]_{l=1,c=0}=\left\{
\begin{aligned}
& u_{0}=c_{0},u_{1}=c_{1},\cdots,u_{t-1}=c_{t-1},\\
&\sum_{j=0}^{t}\left( {\begin{array}{*{20}{ccc}}
	t\\
	j
	\end{array}} \right)(-1)^{j}u_{i+t-j}=0 \quad (i\geq 0),
\end{aligned}
\right.
$$
where $c_{0},c_{1},\cdots, c_{t-1}$ are given elements in $\mathbb{F}_{q}$.

Because an $ILR$ relation with degree $k$ can be interpreted as a homogeneous linear recursion with degree $k+1$ using the equation (3), the relation (4) can also be considered as the general formula of some homogeneous linear recursion relation from this perspective. Similar conclusion is also true for the relation (5).

\end{remark}

(2) In the previous section, notice that the relation used in presented schemes is only the special case when $t$ is variable. As far as we know, there is no work using $ILR$ relation when $l$ is variable. Then, we talk about this situation.

1) Firstly, as for the relation (4), when $t=0$, the $ILR$ relation over $\mathbb{F}_{q}$ is as follows:
$$
[ILR]_{t=0}=\left\{
\begin{aligned}
& u_{0}=c_{0},u_{1}=c_{1},\cdots,u_{l-2}=c_{l-2},\\
&\sum_{j=0}^{l-1}\left( {\begin{array}{*{20}{ccc}}
	l-1\\
	j
	\end{array}} \right)(-1)^{j}u_{i+l-j-1}=\frac{(i)_{l}}{l!}c \quad (i\geq 0),
\end{aligned}
\right.
$$
where $c,c_{0},c_{1},\cdots, c_{l-2}$ are given elements in $\mathbb{F}_{q}$. Therefore, $u_{i}=p(i)$, where $p(x)=A_{0}+A_{1}x+\cdots+A_{2l-1}x^{2l-1}$ is a polynomial over $\mathbb{F}_{q}$.

Then, when $t=1$, the $ILR$ relation over $\mathbb{F}_{q}$ is as follows:
$$
[ILR]_{t=1}=\left\{
\begin{aligned}
& u_{0}=c_{0},u_{1}=c_{1},\cdots,u_{l-1}=c_{l-1},\\
&\sum_{j=0}^{l}\left( {\begin{array}{*{20}{ccc}}
	l\\
	j
	\end{array}} \right)(-1)^{j}u_{i+l-j}=\frac{(i)_{l}}{l!}c \quad (i\geq 0),
\end{aligned}
\right.
$$
where $c,c_{0},c_{1},\cdots, c_{l-1}$ are given elements in $\mathbb{F}_{q}$. Therefore, $u_{i}=p(i)$, where $p(x)=A_{0}+A_{1}x+\cdots+A_{2l}x^{2l}$ is a polynomial over $\mathbb{F}_{q}$.

2) Secondly, as for the relation (5), when $t=0$, the $ILR'$ relation over $\mathbb{F}_{q}$ is as follows:
$$
[ILR']_{t=0}=\left\{
\begin{aligned}
& u_{0}=c_{0},u_{1}=c_{1},\cdots,u_{l-2}=c_{l-2},\\
&\sum_{j=0}^{l-1}\left( {\begin{array}{*{20}{ccc}}
	l-1\\
	j
	\end{array}} \right)u_{i+l-j-1}=(-1)^{i}\frac{(i)_{l}}{l!}c \quad (i\geq 0),
\end{aligned}
\right.
$$
where $c,c_{0},c_{1},\cdots,c_{l-2}$ are given elements in $\mathbb{F}_{q}$. Therefore, $u_{i}=(-1)^{i}p(i)$, where $p(x)=A_{0}+A_{1}x+\cdots+A_{2l-1}x^{2l-1}$ is a polynomial over $\mathbb{F}_{q}$.

Then, when $t=1$, the $ILR'$ relation over $\mathbb{F}_{q}$ is as follows:
$$
[ILR']_{t=1}=\left\{
\begin{aligned}
& u_{0}=c_{0},u_{1}=c_{1},\cdots,u_{l-1}=c_{l-1},\\
&\sum_{j=0}^{l}\left( {\begin{array}{*{20}{ccc}}
	l\\
	j
	\end{array}} \right)u_{i+l-j}=(-1)^{i}\frac{(i)_{l}}{l!}c \quad (i\geq 0),
\end{aligned}
\right.
$$
where $c,c_{0},c_{1},\cdots, c_{l-1}$ are given elements in $\mathbb{F}_{q}$. Therefore, $u_{i}=(-1)^{i}p(i)$, where $p(x)=A_{0}+A_{1}x+\cdots+A_{2l}x^{2l}$ is a polynomial over $\mathbb{F}_{q}$.

\begin{remark}
From the description above, we discover that no matter how the value of $t$ changes, the right hand side of these $ILR$ relations remains unchanged. If we utilize this type of $ILR$ relation to construct MSSS, this property will reduce the memory consumption to some extent, because we do not need to choose different values of $c$ to satisfy different requirements of security level.
\end{remark}

In order to make a distinction between these two types of formulas, we call the first type of formula Type-$t$ $ILR$ relation, where $t$ is variable. And we call the second type of formula Type-$l$ $ILR$ relation, where $l$ is variable. Depending on the discussion above, we can use our relations (4) and (5) to construct different SMSSs, MSSSPOs, and MSSSAOs for different security requirements, and what we need to do is only to make some small changes on $t$ or $l$.

\subsection{Lattices and Ajtai's function}

In this section, we introduce the knowledge of lattices and a lattice-based one-way function, namely, Ajtai's function. Let $\mathbb{R}$ represent the field of real numbers and $\mathbb{Z}$ represent the ring of integers.

\begin{definition}
Let $\mathbf{b}_{1},\mathbf{b}_{2},\cdots,\mathbf{b}_{n}\in \mathbb{R}^{m}$ be $n$ linearly independent vectors. The lattice $\mathcal{L}$ generated by $\mathbf{B}=\{\mathbf{b}_{1},\mathbf{b}_{2},\cdots,\mathbf{b}_{n}\}$ is defined as the vector set of all linear combinations of $\mathbf{b}_{1},\mathbf{b}_{2},\cdots,\mathbf{b}_{n}$, and all the coefficients used here are chosen from $\mathbb{Z}$, that is to say,
$$
\mathcal{L}=\mathcal{L}(\mathbf{B})=\{z_{1}\mathbf{b}_{1}+z_{2}\mathbf{b}_{2}+\cdots+z_{n}\mathbf{b}_{n}:z_{1},z_{2},\cdots,z_{n}\in \mathbb{Z}\}.
$$
\end{definition}
\noindent The set of linearly independent vectors $\mathbf{B}=\{\mathbf{b}_{1},\mathbf{b}_{2},\cdots,\mathbf{b}_{n}\}$ is called a basis of $\mathcal{L}$ and $n$ is called the dimension or the rank of $\mathcal{L}$.\\

Generally speaking, the security of the lattice-based one-way function relies on the hardness of lattice problems, among which the shortest vector problem (SVP) and the closest vector problem (CVP) are the most famous lattice intractability problems. Then we give the definition of these two problems, and notice that here $\parallel \cdot \parallel$ denotes an arbitrary norm.

\begin{definition}
Given a lattice $\mathcal{L}=\mathcal{L}(\mathbf{B})$, where $\mathbf{B}$ is an arbitrary basis, the shortest vector problem (SVP) is to find a non-zero vector $\mathbf{v}\in \mathcal{L}$ satisfying
$$||\mathbf{v}|| \leq ||\mathbf{u}||$$
for any non-zero vector $\mathbf{u}\in \mathcal{L}$.

Given a lattice $\mathcal{L}=\mathcal{L}(\mathbf{B})$, where $\mathbf{B}$ is an arbitrary basis, and a target vector $\mathbf{t}\in \mathcal{L}$, the closest vector problem (CVP) is to find a non-zero vector $\mathbf{v}\in \mathcal{L}$ satisfying
$$||\mathbf{v}-\mathbf{t}|| \leq ||\mathbf{u}-\mathbf{t}||$$
for any non-zero vector $\mathbf{u}\in \mathcal{L}$.
\end{definition}

In fact, especially significant to lattice-based cryptography are approximate versions of SVP and CVP mentioned above, which are described by an approximation factor $\gamma\geq 1$. Next, the $\gamma$-approximate SVP and CVP are defined as follows:

\begin{definition}

Given a lattice $\mathcal{L}=\mathcal{L}(\mathbf{B})$, where $\mathbf{B}$ is an arbitrary basis, the $\gamma$-approximate SVP is to find a non-zero vector $\mathbf{v}\in \mathcal{L}$ satisfying
$$||\mathbf{v}||\leq \gamma ||\mathbf{u}||$$
for any non-zero vector $\mathbf{u}\in \mathcal{L}$.

Given a lattice $\mathcal{L}=\mathcal{L}(\mathbf{B})$, where $\mathbf{B}$ is an arbitrary basis, and a target vector $\mathbf{t}\in \mathcal{L}$, the $\gamma$-approximate CVP is to find a non-zero vector $\mathbf{v}\in \mathcal{L}$ satisfying
$$||\mathbf{v}-\mathbf{t}||\leq \gamma ||\mathbf{u}-\mathbf{t}||$$
for any non-zero vector $\mathbf{u}\in \mathcal{L}$.
\end{definition}

Note that when $\gamma=1$, we get SVP and CVP in Definition 4.

\begin{definition}
If a lattice $\mathcal{L}$ satisfies $q\mathbb{Z}^{n}\subseteq \mathcal{L} \subseteq \mathbb{Z}^{n}$ for some (possibly prime) integer $q$, the lattice is called the $q$-ary lattice.
\end{definition}

For example, we can get a $q$-ary lattice with dimension $m$ from a set of vectors $\mathbf{x}\in \mathbb{Z}^{m}$ satisfying the equality $A\mathbf{x}=0 \mod q$, where $A$ is a random matrix from $\mathbb{Z}_{q}^{n\times m}$ and $q$ is the modulus. That is to say, the $q$-ary lattice is expressed by
$$\mathcal{L}_{q}^{\bot}(A)=\{\mathbf{x}\in \mathbb{Z}^{m}:A\mathbf{x}=0 \mod q\}$$
For the remainder of this paper, we restrict our attention to $q$-ary lattices only and we assume that $q$ is a prime.\\

Then, we introduce a lattice-based one-way function, called Ajtai's function \cite{A96}. Given a uniformly random matrix $A\in \mathbb{Z}_{q}^{n\times m}$ and $\mathbf{x}\in \{0,1\}^{m}$, Ajtai's function is $$f_{A}(\mathbf{x})=A\mathbf{x }\mod q.$$

Inverting this function causes the problem as follows:

\begin{definition}
Given $q,m,n\in \mathbb{N}$ satisfying $m> n \log q$, and $q=O(n^{c})$ for some positive constant $c$, a random matrix $A$ chosen uniformly from $\mathbb{Z}_{q}^{n\times m}$ and a vector $\mathbf{y}=A\mathbf{x} \mod q$ from $\mathbb{Z}_{q}^{n}$ for some random vector $\mathbf{x}\in \{0,1\}^{m}$, it is hard to get the vector $\mathbf{x}$.
\end{definition}

In \cite{A96}, Ajtai showed that it is impossible to invert this one-way function with non-negligible probability, which is because inverting this function means that solving any instance of $n^{c}$-approximate SVP.

\subsection{The multi-stage secret sharing scheme}
Secret sharing was designed to deal with the secret key distribution. Note that the $(t,n)$-threshold SSS is our main consideration. In a $(t,n)$-threshold SSS, we can assign a shared secret $S$ to $n$ parties $\mathcal{P}=\{P_{1},P_{2},\cdots,P_{n}\}$ with $n$ different values, i.e., shares $\{sh_{1},sh_{2},\cdots,$ $sh_{n}\}$, and at least $t$ components of $\mathcal{P}$ which we call qualified sets can restore $S$ by pulling their shares together. All qualified sets $\{A\subseteq \mathcal{P}: |A|\geq t\}$ ($t\leq n$) constitute the threshold access structure $\Gamma \subseteq 2^{\mathcal{P}}$.\\

Then we introduce two vital definitions in SSS.

\begin{definition}

An SSS is \textbf{perfect} if it satisfies two characteristics as follows:

(1) Any qualified sets of $\mathcal{P}$ can restore the secret by their own shares.

(2) For parties which are not in the access structure $\Gamma$, they can not get any information about the secret $S$.
\end{definition}

\begin{definition}
An SSS is \textbf{ideal} if every share has the same size as the secret.
\end{definition}

Next, we consider multi-secret sharing schemes (MSSs). In general, we assume that there are $k$ secrets $\{S_{1},S_{2},$ $\cdots,S_{k}\}$ to be distributed, and only one share $sh_{j}$ $(1\leq j\leq n)$ needs to be stored by every participant $P_{j}$, which means that every share needs to be reused to restore $k$ secrets. Obviously, their shares should be masked in order to obtain corresponding shadows so that they can be reused to recover multiple secrets securely. As we stated before, there are three categories of MSSs. In this paper, we only take MSSSAO as an example to show the generality of our method. Hence, to better illustrate our schemes, we provide a standard model of the verifiable MSSSAO.

\begin{definition}

A verifiable multi-stage secret sharing scheme recovering secrets with any order $\Omega=(Set,Con,Rec,Ver)$ consists of four phases in the following form:

\begin{itemize}
  \item In the \textbf{Setup phase}, owning $n$ participants $\mathcal{P}=\{P_{1},P_{2},\cdots,P_{n}\}$ and $k$ independent access structure $\Gamma_{1},\Gamma_{2},\cdots,\Gamma_{k}$, the dealer mainly generates original shares $sh_{j}$ for every participant $P_{j}$ for $1\leq j\leq n$, and some public messages.
  \item In the \textbf{Construction phase}, the dealer utilizes $n$ shares $\{sh_{1},sh_{2},\cdots,sh_{n}\}$ and public massages generated in the last phase, and $k$ secrets $\{S_{1},S_{2},$ $\cdots,S_{k}\}$ to obtain corresponding shadows, even subshadows and other public messages.
  \item In the \textbf{Recovery phase}, before recovering the shared secrets, these participants can use public messages to verify the authenticity of these shares. Then, any authorized set can pull their shadows to get corresponding subshadows and restore every secret $S_{i}$ for $ 1\leq i\leq k$ independently. Notice that the recovered secrets are not possibly correct.
  \item In the \textbf{Verification phase}, any participant in the authorized set can test the correctness of their recovered secrets $S_{i}$ for $ 1\leq i\leq k$ by public messages.

\end{itemize}

\end{definition}

Because in the MSSSAO, from the recovered secrets, the adversary cannot get any information about the unrecovered secrets, every secret $S_{i}$ $(i=1,2,\cdots,k)$ owns its unique access structure $\Gamma_{i}$ independently. For instance, in a threshold MSSSAO, the access structure of one secret $S_{i}$ has the following form $\Gamma_{i}=\{A \subseteq \mathcal{P}: |A|\geq t_{i}\}$ ($i=1,2,\cdots,k$).

Then, we propose the \textbf{security model} of the MSSSAO $\Omega=(Set,Con,Rec,$ $Ver)$ with several conditions:

\textbf{Correctness}

For $i\in \{1,2,\cdots,k\}$ and some subset $A\in \Gamma_{i}$, the MSSSAO can reconstruct the corresponding secret correctly by the ways mentioned in the Recovery phase.

\textbf{Verifiability}

The participants can verify the correctness or the validity of the recovered secrets.

\textbf{Privacy}

(1) The adversary can get no useful information about the unrecovered secrets from the recovered secrets.

(2) Any number of participants less than the threshold cannot restore the secrets.\\

Later, we will propose four verifiable MSSSAOs with two types of $ILR$s. In the first two schemes, we use two different Type-$t$ $ILR$s. For every secret, we need choose different constant vectors to define diverse access structures for diverse secrets. In the last two schemes, we use two different Type-$l$ $ILR$s. Under this circumstance, we only need select one constant vector to define diverse access structures for every secret.

\section{Scheme 1}

In this scheme, let $\mathcal{D}$ be the dealer, $\mathcal{P}=\{P_{1},P_{2},\cdots,P_{n}\}$ be the collection of participants, and $\mathcal{S}=\{S_{1},S_{2},\cdots,S_{k}\}$ be a set of secrets. $S_{i}$ is from $\mathbb{F}_{q}^{t_{i}}$ where $t_{i}$ ($0< t_{i}\leq n$) is the threshold corresponding to every $S_{i}$ $(1\leq i\leq k)$, and $q$ is a big prime number. It should be noted that all matrix operations are performed on $\mathbb{F}_{q}$. Especially, the $ILR$ relation used here is a special case of relation (4) which is a Type-$t$ $ILR$ relation, and we set $l=1$.

\subsection{Setup phase}
(1) $\mathcal{D}$ randomly selects $sh_{j}\in \{0,1\}^{r}$ $(j=1,2,\cdots,n)$ as $P_{j}$'s share such that $sh_{i}\neq sh_{j}$ for $i\neq j$ where $r\geq \max \{t\log t, \log n\}$ with $t=\max\{t_{1},t_{2},\cdots,t_{k}\}$, and distributes the share vector $sh_{j}$ to the participant $P_{j}$ via a secure channel.

(2) $\mathcal{D}$ chooses uniformly random matrices $G_{i}\in \mathbb{F}_{q}^{t_{i} \times r}$ for $i=1,2,\cdots,k$.

(3) $\mathcal{D}$ chooses a uniformly random matrix $F\in \mathbb{F}_{q}^{t\times r}$, and computes $h_{j}=F\cdot sh_{j}$ for $j=1,2,\cdots,n$ by using Ajtai's function.

(4) $\mathcal{D}$ computes $H(S_{i})$ for $i=1,2,\cdots,k$, where $H(\cdot)$ is any public one-way hash function.

(5) $\mathcal{D}$ publishes $G_{1},G_{2},\cdots,G_{k},F,h_{1},h_{2},\cdots,h_{n},H(S_{1}),H(S_{2}),\cdots,H(S_{k})$ on the bulletin board.

\subsection{Construction phase}
$\mathcal{D}$ generates the subshadow $u_{i,j}$ for each secret $S_{i}$ $(1\leq i\leq k)$ and each participant $P_{j}$ $(1\leq j\leq n)$, and two extra $u_{i,n+1}$, $u_{i,n+2}$ for each secret $S_{i}$ $(1\leq i\leq k)$ as follows:

(1) Randomly chooses $k$ different constant vectors $c_{i}\in \mathbb{F}_{q}^{t_{i}}$, $i=1,2,\cdots,k$.

(2) For $i=1,2,\cdots,k$, computes shadow $d_{j}^{i}=G_{i}\cdot sh_{j}$ for $j=1,2,\cdots, t_{i}-1$ by using Ajtai's function.

(3) For $i=1,2,\cdots,k$, considers an inhomogeneous linear recurring sequence $\{u_{i,j}\}_{j \geq 0}$ defined by the following formula:
$$[*]_{l=1}=\left\{
\begin{aligned}
& u_{i,0}=S_{i},u_{i,1}=d_{1}^{i},\cdots,u_{i,t_{i}-1}=d_{t_{i}-1}^{i},\\
& \sum_{\lambda=0}^{t_{i}}\left( {\begin{array}{*{20}{ccc}}
	t_{i}\\
	\lambda
	\end{array}} \right)(-1)^{\lambda}u_{i,j+t_{i}-\lambda}=c_{i}j\quad (j\geq 0),
\end{aligned}
\right.$$
and calculates $u_{i,j}$ for $1\leq i\leq k$ and $t_{i}\leq j\leq n+2$.

(4) Calculates $r_{i,j}=u_{i,j}-d_{j}^{i}$ for $1\leq i\leq k$ and $t_{i}\leq j\leq n$.

(5) Publishes $c_{1},c_{2},\cdots,c_{k}$, $r_{i,j}$ for $1\leq i\leq k$, $t_{i} \leq j\leq n$, $u_{i,n+1}$ and $u_{i,n+2}$ for $1\leq i\leq k$.

\subsection{Recovery phase}

When the participant $P_{j}$ receives his share $sh_{j}$, $1\leq j\leq n$, he examines whether the following equation holds:
$$F\cdot sh_{j}\overset{?}{=}h_{j},\qquad 1\leq j\leq n.$$
If the shares are proved to be authentic, any authorized participants can recover the secrets with any order by the following two ways according to different conditions they meet.\\

$\textbf{Way\: 1}$: For $1\leq i\leq k$, assume that any $t_{i}$ participants $\{P_{j}\}_{j\in I_{i}}$ $(I_{i}=\{i_{1},i_{2},\cdots,i_{t_{i}}\}$ $\subseteq \{1,2,\cdots,n\})$ reconstruct the secret $S_{i}$ together. At first, these participants collect $t_{i}$ shadows $\{d_{j}^{i}\}_{j\in I_
{i}}$ to calculate $t_{i}$ subshadows as follows:
$$u_{i,j}=\left\{
\begin{aligned}
& d_{j}^{i},\qquad\quad\:\, if\quad 1\leq j\leq t_{i}-1;\\
& d_{j}^{i}+r_{i,j}, \quad if\quad t_{i}\leq j\leq n.
\end{aligned}
\right.$$

Method 1. They utilize $t_{i}$ subshadows $\{u_{i,j}|j\in I_{i}\}$ and published $\{u_{i,n+1}$, $u_{i,n+2}\}$ to solve the following Vandermond equations for $1\leq i\leq k$ and $1\leq s\leq t_{i}$:

$$       
\left(                 
  \begin{array}{ccccc}   
    1 & i_{1} & i_{1}^{2} \cdots & i_{1}^{t_{i}+1}\\  
    1 & i_{2} & i_{2}^{2} \cdots & i_{2}^{t_{i}+1}\\
    \vdots & \vdots & \vdots & \vdots \\
    1 & i_{t_{i}} & i_{t_{i}}^{2} \cdots & i_{t_{i}}^{t_{i}+1}\\ 
    1 & n+1 & (n+1)^{2} \cdots & (n+1)^{t_{i}+1} \\ 
    1 & n+2 & (n+2)^{2} \cdots & (n+2)^{t_{i}+1}  
  \end{array}
\right)\quad
\left(                 
  \begin{array}{c}   
    A_{i,0}^{[s]}\\  
    A_{i,1}^{[s]}\\ 
    \vdots\\
    A_{i,t_{i}-1}^{[s]}\\
    A_{i,t_{i}}^{[s]}\\
    A_{i,t_{i}+1}^{[s]} 
  \end{array}
\right)=
\left(                 
  \begin{array}{c}   
    u_{i,i_{1}}^{[s]}\\  
    u_{i,i_{2}}^{[s]}\\  
    \vdots\\ 
    u_{i,i_{t_{i}}}^{[s]}\\ 
    u_{i,n+1}^{[s]}\\ 
    u_{i,n+2}^{[s]} 
  \end{array}
\right),                 
$$
where $u_{i,j}^{[s]}$ ($j\in I_{i}'=I_{i}\cup \{n+1,n+2\}$) means the $s$th component of vector $u_{i,j}$, and $A_{i,v}^{[s]}$ ($0\leq v\leq t_{i}+1$) denotes corresponding coefficient in the general term of $u_{i,j}^{[s]}$ as in Theorem 1.

Then, they obtain $A_{i,0}^{[s]},A_{i,1}^{[s]},\cdots,A_{i,t_{i}+1}^{[s]}$ respectively, and further the general expression of $u_{i,j}^{[s]}$ in $[*]_{l=1}$ according to the Theorem 1:
$$u_{i,j}^{[s]}=p_{i}^{[s]}(j)=A_{i,0}^{[s]}+A_{i,1}^{[s]}j+\cdots+A_{i,t_{i}+1}^{[s]}j^{t_{i}+1}.$$
Consequently,
$$S_{i}^{[s]}=u_{i,0}^{[s]}=p_{i}^{[s]}(0)=A_{i,0}^{[s]}.$$
It can be seen from the above, these authorized participants can restore the secret $S_{i}$ through $A_{i,0}^{[s]}$ $(1\leq s\leq t_{i})$ directly.

Method 2. For $1\leq i\leq k$, utilizing $t_{i}$ pairs $\{(j,u_{i,j})|j \in I_{i}\}$ and published $(n+1,u_{i,n+1})$, $(n+2,u_{i,n+2})$, they can restore the shared secret $S_{i}$ by the Lagrange Interpolation Formula:
$$S_{i}^{[s]}=p_{i}^{[s]}(0)=\sum_{j\in I_{i}'}u_{i,j}^{[s]}\prod_{y\in I_{i}', y\neq j}\frac{y}{y-j}\quad (1\leq s\leq t_{i}),$$
where $I_{i}'=I_{i}\cup \{n+1,n+2\}=\{i_{1},i_{2},\cdots,i_{t_{i}}\}\cup \{n+1,n+2\}$.

$\textbf{Way\: 2}$: For $1\leq i\leq k$, we assume that $t_{i}$ participants $\{P_{j},P_{j+1},\cdots,P_{j+t_{i}-1}\}$ $(1\leq j\leq n-t_{i}+1)$ collaborate to reconstruct the secrets. They pool their shadows $\{d_{m}^{i}\}_{m\in I_{i}}$ $(I_{i}=\{j,j+1,\cdots,j+t_{i}-1\})$ to calculate $t_{i}$ successive subshadows as follows:
$$u_{i,m}=\left\{
\begin{aligned}
& d_{m}^{i},\qquad\quad\:\, if\quad 1\leq m\leq t_{i}-1;\\
& d_{m}^{i}+r_{i,m}, \quad if\quad t_{i}\leq m\leq n.
\end{aligned}
\right.$$

For $1\leq i\leq k$, since $c_{i}$ has been released, they can use the formulas below to calculate $u_{i,j-1},u_{i,j-2},$ $\cdots,u_{i,0}$ one by one:
$$u_{i,m}=(-1)^{t_{i}}c_{i}m-\sum_{\lambda=0}^{t_{i}-1}\left( {\begin{array}{*{20}{ccc}}
	t_{i}\\
	\lambda
	\end{array}} \right)(-1)^{\lambda+t_{i}}u_{i,m+t_{i}-\lambda}\qquad (0\leq m< j).$$
Consequently, they have $u_{i,0}=S_{i}$ for $1\leq i\leq k$.


\subsection{Verification phase}

After recovery of the secrets, these participants can use the values of $H(S_{i})$ $(1\leq i\leq k)$ which have been published on the bulletin board to test the correctness of the shared secrets.

\section{Scheme 2}
In this scheme, let $\mathcal{D}$ be the dealer, $\mathcal{P}=\{P_{1},P_{2},\cdots,P_{n}\}$ be the collection of participants, and $\mathcal{S}=\{S_{1},S_{2},\cdots,S_{k}\}$ be a set of secrets. $S_{i}$ is from $\mathbb{F}_{q}^{t_{i}}$ where $t_{i}$ ($0< t_{i}\leq n$) is the threshold corresponding to every $S_{i}$ $(1\leq i\leq k)$, and $q$ is a big prime number. It should be noted that all matrix operations are performed on $\mathbb{F}_{q}$. Especially, the $ILR$ relation used here is a special case of relation (5) which is a Type-$t$ $ILR$ relation, and we set $l=1$.

\subsection{Setup phase}
(1) $\mathcal{D}$ randomly selects $sh_{j}\in \{0,1\}^{r}$ $(j=1,2,\cdots,n)$ as $P_{j}$'s share such that $sh_{i}\neq sh_{j}$ for $i\neq j$ where $r\geq \max \{t\log t, \log n\}$ with $t=\max\{t_{1},t_{2},\cdots,t_{k}\}$, and distributes the share vector $sh_{j}$ to the participant $P_{j}$ via a secure channel.

(2) $\mathcal{D}$ chooses uniformly random matrices $G_{i}\in \mathbb{F}_{q}^{t_{i} \times r}$ for $i=1,2,\cdots,k$.

(3) $\mathcal{D}$ chooses a uniformly random matrix $F\in \mathbb{F}_{q}^{t\times r}$, and computes $h_{j}=F\cdot sh_{j}$ for $j=1,2,\cdots,n$ by using Ajtai's function.

(4) $\mathcal{D}$ computes $H(S_{i})$ for $i=1,2,\cdots,k$, where $H(\cdot)$ is any public one-way hash function.

(5) $\mathcal{D}$ publishes $G_{1},G_{2},\cdots,G_{k},F,h_{1},h_{2},\cdots,h_{n},H(S_{1}),H(S_{2}),\cdots,H(S_{k})$ on the bulletin board.

\subsection{Construction phase}

$\mathcal{D}$ generates the subshadow $u_{i,j}$ for each secret $S_{i}$ $(1\leq i\leq k)$ and each participant $P_{j}$ $(1\leq j\leq n)$, and two extra $u_{i,n+1}$, $u_{i,n+2}$ for each secret $S_{i}$ $(1\leq i\leq k)$ as follows:

(1) Randomly chooses $k$ different constant vectors $c_{i}\in \mathbb{F}_{q}^{t_{i}}$, $i=1,2,\cdots,k$.

(2) For $i=1,2,\cdots,k$, computes shadow $d_{j}^{i}=G_{i}\cdot sh_{j}$ for $j=1,2,\cdots, t_{i}-1$ by using Ajtai's function.

(3) For $i=1,2,\cdots,k$, considers an inhomogeneous linear recurring sequence $\{u_{i,j}\}_{j \geq 0}$ defined by the following formula:
$$[**]_{l=1}=\left\{
\begin{aligned}
& u_{i,0}=S_{i},u_{i,1}=d_{1}^{i},\cdots,u_{i,t_{i}-1}=d_{t_{i}-1}^{i},\\
& \sum_{\lambda=0}^{t_{i}}\left( {\begin{array}{*{20}{ccc}}
	t_{i}\\
	\lambda
	\end{array}} \right)u_{i,j+t_{i}-\lambda}=(-1)^{j}c_{i}j\quad (j\geq 0),
\end{aligned}
\right.$$
and calculates $u_{i,j}$ for $1\leq i\leq k$ and $t_{i}\leq j\leq n+2$.

(4) Calculates $r_{i,j}=u_{i,j}-d_{j}^{i}$ for $1\leq i\leq k$ and $t_{i}\leq j\leq n$.

(5) Publishes $c_{1},c_{2},\cdots,c_{k}$, $r_{i,j}$ for $1\leq i\leq k$, $t_{i} \leq j\leq n$, $u_{i,n+1}$ and $u_{i,n+2}$ for $1\leq i\leq k$.

\subsection{Recovery phase}

When the participant $P_{j}$ receives his share $sh_{j}$, $1\leq j\leq n$, he examines whether the following equation holds:
$$F\cdot sh_{j}\overset{?}{=}h_{j},\qquad 1\leq j\leq n.$$
If the shares are proved to be authentic, any authorized participants can recover the secrets with any order by the following two ways according to different conditions they meet.\\

$\textbf{Way\: 1}$: For $1\leq i\leq k$, assume that any $t_{i}$ participants $\{P_{j}\}_{j\in I_{i}}$ $(I_{i}=\{i_{1},i_{2},\cdots,i_{t_{i}}\}$ $\subseteq \{1,2,\cdots,n\})$ reconstruct the secret $S_{i}$ together. At first, these participants collect $t_{i}$ shadows $\{d_{j}^{i}\}_{j\in I_
{i}}$ to calculate $t_{i}$ subshadows as follows:
$$u_{i,j}=\left\{
\begin{aligned}
& d_{j}^{i},\qquad\quad\:\, if\quad 1\leq j\leq t_{i}-1;\\
& d_{j}^{i}+r_{i,j}, \quad if\quad t_{i}\leq j\leq n.
\end{aligned}
\right.$$

Method 1. They utilize $t_{i}$ subshadows $\{u_{i,j}|j\in I_{i}\}$ and published $\{u_{i,n+1}$, $u_{i,n+2}\}$ to solve the following Vandermond equations for $1\leq i\leq k$ and $1\leq s\leq t_{i}$:

$$      
\left(                 
  \begin{array}{ccccc}   
    1 & i_{1} & i_{1}^{2} \cdots & i_{1}^{t_{i}+1}\\  
    1 & i_{2} & i_{2}^{2} \cdots & i_{2}^{t_{i}+1}\\
    \vdots & \vdots & \vdots & \vdots \\
    1 & i_{t_{i}} & i_{t_{i}}^{2} \cdots & i_{t_{i}}^{t_{i}+1}\\ 
    1 & n+1 & (n+1)^{2} \cdots & (n+1)^{t_{i}+1} \\ 
    1 & n+2 & (n+2)^{2} \cdots & (n+2)^{t_{i}+1}  
  \end{array}
\right)\quad
\left(                 
  \begin{array}{c}   
    A_{i,0}^{[s]}\\  
    A_{i,1}^{[s]}\\ 
    \vdots\\
    A_{i,t_{i}-1}^{[s]}\\
    A_{i,t_{i}}^{[s]}\\
    A_{i,t_{i}+1}^{[s]} 
  \end{array}
\right)=
\left(                 
  \begin{array}{c}   
    u_{i,i_{1}}^{[s]}\\  
    u_{i,i_{2}}^{[s]}\\  
    \vdots\\ 
    u_{i,i_{t_{i}}}^{[s]}\\ 
    u_{i,n+1}^{[s]}\\ 
    u_{i,n+2}^{[s]} 
  \end{array}
\right),                 
$$
where $u_{i,j}^{[s]}$ ($j\in I_{i}'=I_{i}\cup \{n+1,n+2\}$) means the $s$th component of vector $u_{i,j}$, and $A_{i,v}^{[s]}$ ($0\leq v\leq t_{i}+1$) denotes corresponding coefficient in the general term of $u_{i,j}^{[s]}$ as in Theorem 2.

Then, they obtain $A_{i,0}^{[s]},A_{i,1}^{[s]},\cdots,A_{i,t_{i}+1}^{[s]}$ respectively, and further the general expression of $u_{i,j}^{[s]}$ in $[**]_{l=1}$ according to the Theorem 2:
$$u_{i,j}^{[s]}=(-1)^{j}p_{i}^{[s]}(j)=(-1)^{j}(A_{i,0}^{[s]}+A_{i,1}^{[s]}j+\cdots+A_{i,t_{i}+1}^{[s]}j^{t_{i}+1}).$$
Consequently,
$$S_{i}^{[s]}=u_{i,0}^{[s]}=(-1)^{0}p_{i}^{[s]}(0)=A_{i,0}^{[s]}.$$
It can be seen from the above, these authorized participants can restore the secret $S_{i}$ through $A_{i,0}^{[s]}$ $(1\leq s\leq t_{i})$ directly.

Method 2. For $1\leq i\leq k$, utilizing $t_{i}$ pairs $\{(j,u_{i,j})|j \in I_{i}\}$ and published $(n+1,u_{i,n+1})$, $(n+2,u_{i,n+2})$, they can restore the shared secret $S_{i}$ by the Lagrange Interpolation Formula:
$$S_{i}^{[s]}=(-1)^{0}p_{i}^{[s]}(0)=\sum_{j\in I_{i}'}u_{i,j}^{[s]}\prod_{y\in I_{i}', y\neq j}\frac{y}{y-j}\quad (1\leq s\leq t_{i}),$$
where $I_{i}'=I_{i}\cup \{n+1,n+2\}=\{i_{1},i_{2},\cdots,i_{t_{i}}\}\cup \{n+1,n+2\}$.

$\textbf{Way\: 2}$: For $1\leq i\leq k$, we assume that $t_{i}$ participants $\{P_{j},P_{j+1},\cdots,P_{j+t_{i}-1}\}$ $(1\leq j\leq n-t_{i}+1)$ collaborate to reconstruct the secrets. They pool their shadows $\{d_{m}^{i}\}_{m\in I_{i}}$ $(I_{i}=\{j,j+1,\cdots,j+t_{i}-1\})$ to calculate $t_{i}$ successive subshadows as follows:
$$u_{i,m}=\left\{
\begin{aligned}
& d_{m}^{i},\qquad\quad\:\, if\quad 1\leq m\leq t_{i}-1;\\
& d_{m}^{i}+r_{i,m}, \quad if\quad t_{i}\leq m\leq n.
\end{aligned}
\right.$$

For $1\leq i\leq k$, since $c_{i}$ has been released, they can use the formulas below to calculate $u_{i,j-1},u_{i,j-2},$ $\cdots,u_{i,0}$ one by one:
$$u_{i,m}=(-1)^{m}c_{i}m-\sum_{\lambda=0}^{t_{i}-1}\left( {\begin{array}{*{20}{ccc}}
	t_{i}\\
	\lambda
	\end{array}} \right)u_{i,m+t_{i}-\lambda}\quad (0\leq m< j).$$
Consequently, they have $u_{i,0}=S_{i}$ for $1\leq i\leq k$.

\subsection{Verification phase}

After recovery of the secrets, these participants can use the values of $H(S_{i})$ $(1\leq i\leq k)$ which have been published on the bulletin board to test the correctness of the shared secrets.

\section{Scheme 3}

In this scheme, let $\mathcal{D}$ be the dealer, $\mathcal{P}=\{P_{1},P_{2},\cdots,P_{n}\}$ be the collection of participants, and $\mathcal{S}=\{S_{1},S_{2},\cdots,S_{k}\}$ be a set of secrets. $S_{i}$ is from $\mathbb{F}_{q}^{l_{i}}$ where $l_{i}$ ($0< l_{i}\leq n$) is the threshold corresponding to every $S_{i}$ $(1\leq i\leq k)$, and $q$ is a big prime number. It should be noted that all matrix operations are performed on $\mathbb{F}_{q}$. Especially, the $ILR$ relation used here is a special case of relation (4) which is a Type-$l$ $ILR$ relation, and we set $t=1$.

\subsection{Setup phase}

(1) $\mathcal{D}$ randomly selects $sh_{j}\in \{0,1\}^{r}$ $(j=1,2,\cdots,n)$ as $P_{j}$'s share such that $sh_{i}\neq sh_{j}$ for $i\neq j$ where $r\geq \max (l\log l, \log n)$ with $l=\max\{l_{1},l_{2},\cdots,l_{k}\}$, and distributes the share vector $sh_{j}$ to the participant $P_{j}$ via a secure channel.

(2) $\mathcal{D}$ chooses uniformly random matrices $G_{i}\in \mathbb{F}_{q}^{l_{i} \times r}$ for $i=1,2,\cdots,k$.

(3) $\mathcal{D}$ chooses a uniformly random matrix $F\in \mathbb{F}_{q}^{l\times r}$, and computes $h_{j}=F\cdot sh_{j}$ for $j=1,2,\cdots,n$ by using Ajtai's function.

(4) $\mathcal{D}$ computes $H(S_{i})$ for $i=1,2,\cdots,k$, where $H(\cdot)$ is any public one-way hash function.

(5) $\mathcal{D}$ publishes $G_{1},G_{2},\cdots,G_{k},F,h_{1},h_{2},\cdots,h_{n},H(S_{1}),H(S_{2}),\cdots,H(S_{k})$ on the bulletin board.

\subsection{Construction phase}
$\mathcal{D}$ generates the subshadow $u_{i,j}$ for each secret $S_{i}$ $(1\leq i\leq k)$ and each participant $P_{j}$ $(1\leq j\leq n)$, and extra $u_{i,n+1}, u_{i,n+2}, \cdots, u_{i,n+l_{i}+1}$ for each secret $S_{i}$ $(1\leq i\leq k)$ as follows:

(1) Randomly chooses a constant vector $c\in \mathbb{F}_{q}^{l}$, where $l=\max \{l_{1},l_{2},\cdots,l_{k}\}$.

(2) For $i=1,2,\cdots,k$, computes shadow $d_{j}^{i}=G_{i}\cdot sh_{j}$ for $j=1,2,\cdots, l_{i}-1$ by using Ajtai's function.

(3) For $i=1,2,\cdots,k$, considers an inhomogeneous linear recurring sequence $\{u_{i,j}\}_{j \geq 0}$ defined by the following formula:
$$[***]_{t=1}=\left\{
\begin{aligned}
& u_{i,0}=S_{i},u_{i,1}=d_{1}^{i},\cdots,u_{i,l_{i}-1}=d_{l_{i}-1}^{i},\\
& \sum_{\lambda=0}^{l_{i}}\left( {\begin{array}{*{20}{ccc}}
	l_{i}\\
	\lambda
	\end{array}} \right)(-1)^{\lambda}u_{i,j+l_{i}-\lambda}=\frac{(j)_{l_{i}}}{l_{i}!}c_{[l_{i}]}\quad (j\geq 0),
\end{aligned}
\right.$$
where $c_{[l_{i}]}$ denotes the first $l_{i}$ components of the vector $c$, and calculates $u_{i,j}$ for $1\leq i\leq k$ and $l_{i}\leq j\leq n+l_{i}+1$.

(4) Calculates $r_{i,j}=u_{i,j}-d_{j}^{i}$ for $1\leq i\leq k$ and $l_{i}\leq j\leq n$.

(5) Publishes $c$, $r_{i,j}$ for $1\leq i\leq k$, $l_{i} \leq j\leq n$, $u_{i,n+1}, u_{i,n+2}, \cdots, u_{i,n+l_{i}+1}$ for $1\leq i\leq k$.

\subsection{Recovery phase}
When the participant $P_{j}$ receives his share $sh_{j}$, $1\leq j\leq n$, he examines whether the following equation holds:
$$F\cdot sh_{j}\overset{?}{=}h_{j},\qquad 1\leq j\leq n.$$
If the shares are proved to be authentic, any authorized participants can recover the secrets with any order by the following two ways according to different conditions they meet.\\

$\textbf{Way\: 1}$: For $1\leq i\leq k$, assume that any $l_{i}$ participants $\{P_{j}\}_{j\in I_{i}}$ $(I_{i}=\{i_{1},i_{2},\cdots,i_{l_{i}}\}$ $\subseteq \{1,2,\cdots,n\})$ reconstruct the secret $S_{i}$ together. At first, these participants collect $l_{i}$ shadows $\{d_{j}^{i}\}_{j\in I_
{i}}$ to calculate $l_{i}$ subshadows as follows:
$$u_{i,j}=\left\{
\begin{aligned}
& d_{j}^{i},\qquad\quad\:\, if\quad 1\leq j\leq l_{i}-1;\\
& d_{j}^{i}+r_{i,j}, \quad if\quad l_{i}\leq j\leq n.
\end{aligned}
\right.$$

Method 1. They utilize $l_{i}$ subshadows $\{u_{i,j}|j\in I_{i}\}$ and other $l_{i}+1$ terms $\{u_{i,n+1}$, $u_{i,n+2},\cdots,u_{i,n+l_{i}+1}\}$ to solve the following Vandermond equations for $1\leq i\leq k$ and $1\leq s\leq l_{i}$:

\begin{footnotesize}
$$     
\left(                 
  \begin{array}{ccccc}   
    1 & i_{1} & i_{1}^{2} \cdots & i_{1}^{2l_{i}}\\  
    1 & i_{2} & i_{2}^{2} \cdots & i_{2}^{2l_{i}}\\
    \vdots & \vdots & \vdots & \vdots \\
    1 & i_{l_{i}} & i_{l_{i}}^{2} \cdots & i_{l_{i}}^{2l_{i}}\\ 
    1 & n+1 & (n+1)^{2} \cdots & (n+1)^{2l_{i}} \\ 
    1 & n+2 & (n+2)^{2} \cdots & (n+2)^{2l_{i}} \\ 
     \vdots & \vdots & \vdots & \vdots \\
    1 & n+l_{i}+1 & (n+l_{i}+1)^{2} \cdots & (n+l_{i}+1)^{2l_{i}} \\ 
  \end{array}
\right)\quad
\left(                 
  \begin{array}{c}   
    A_{i,0}^{[s]}\\  
    A_{i,1}^{[s]}\\ 
    \vdots\\
    A_{i,l_{i}-1}^{[s]}\\ 
    A_{i,l_{i}}^{[s]}\\
    A_{i,l_{i}+1}^{[s]}\\ 
    \vdots\\
    A_{i,2l_{i}}^{[s]}
  \end{array}
\right)=
\left(                 
  \begin{array}{c}   
    u_{i,i_{1}}^{[s]}\\  
    u_{i,i_{2}}^{[s]}\\  
    \vdots\\ 
    u_{i,i_{l_{i}}}^{[s]}\\ 
    u_{i,n+1}^{[s]}\\ 
    u_{i,n+2}^{[s]}\\ 
    \vdots\\ 
    u_{i,n+l_{i}+1}^{[s]} 
  \end{array}
\right),                 
$$
\end{footnotesize}

\noindent where $u_{i,j}^{[s]}$ ($j\in I_{i}'=I_{i}\cup \{n+1,n+2,\cdots,n+l_{i}+1\}$) means the $s$th component of vector $u_{i,j}$, and $A_{i,v}^{[s]}$ ($0\leq v\leq 2l_{i}$) denotes corresponding coefficient in the general term of $u_{i,j}^{[s]}$ as in Theorem 1.

Then, they obtain $A_{i,0}^{[s]},A_{i,1}^{[s]},\cdots,A_{i,2l_{i}}^{[s]}$ respectively, and further the general expression of $u_{i,j}^{[s]}$ in $[***]_{t=1}$ according to the Theorem 1:

$$u_{i,j}^{[s]}=p_{i}^{[s]}(j)=A_{i,0}^{[s]}+A_{i,1}^{[s]}j+\cdots+A_{i,2l_{i}}^{[s]}j^{2l_{i}}.$$
Consequently,
$$S_{i}^{[s]}=u_{i,0}^{[s]}=p_{i}^{[s]}(0)=A_{i,0}^{[s]}.$$
It can be seen from the above, these authorized participants can restore the secret $S_{i}$ through $A_{i,0}^{[s]}$ $(1\leq s\leq l_{i})$ directly.

Method 2. For $1\leq i\leq k$, utilizing $2l_{i}+1$ pairs $\{(j,u_{i,j})|j \in I_{i}\}$ and published $\{(n+j,u_{i,n+j})|j=1,2,\cdots,l_{i}+1\}$, they can restore the shared secret $S_{i}$ by the Lagrange Interpolation Formula:
$$S_{i}^{[s]}=p_{i}^{[s]}(0)=\sum_{j\in I_{i}'}u_{i,j}^{[s]}\prod_{y\in I_{i}', y\neq j}\frac{y}{y-j}\quad (1\leq s\leq l_{i}),$$
where $I_{i}'=I_{i}\cup \{n+1,n+2,\cdots,n+l_{i}+1\}=\{i_{1},i_{2},\cdots,i_{l_{i}}\}\cup \{n+1,n+2,\cdots,n+l_{i}+1\}$.

$\textbf{Way\: 2}$: For $1\leq i\leq k$, we assume that $l_{i}$ participants $\{P_{j},P_{j+1},\cdots,P_{j+l_{i}-1}\}$ $(1\leq j\leq n-l_{i}+1)$ collaborate to reconstruct the secrets. They pool their shadows $\{d_{m}^{i}\}_{m\in I_{i}}$ $(I_{i}=\{j,j+1,\cdots,j+l_{i}-1\})$ to calculate $l_{i}$ successive subshadows as follows:
$$u_{i,m}=\left\{
\begin{aligned}
& d_{m}^{i},\qquad\quad\:\, if\quad 1\leq m\leq l_{i}-1;\\
& d_{m}^{i}+r_{i,m},\quad if\quad l_{i}\leq m\leq n.
\end{aligned}
\right.$$

For $1\leq i\leq k$, since $c$ has been released, they can use the formulas below to calculate $u_{i,j-1},u_{i,j-2},$ $\cdots,u_{i,0}$ one by one:
$$u_{i,m}=(-1)^{l_{i}}\frac{(m)_{l_{i}}}{l_{i}!}c_{[l_{i}]}-\sum_{\lambda=0}^{l_{i}-1}\left( {\begin{array}{*{20}{ccc}}
	l_{i}\\
	\lambda
	\end{array}} \right)(-1)^{\lambda+l_{i}}u_{i,m+l_{i}-\lambda}\qquad (0\leq m< j).$$
Consequently, they have $u_{i,0}=S_{i}$ for $1\leq i\leq k$.

\subsection{Verification phase}
After recovery of the secrets, these participants can use the values of $H(S_{i})$ $(1\leq i\leq k)$ which have been published on the bulletin board to test the correctness of the shared secrets.

\section{Scheme 4}
In this scheme, let $\mathcal{D}$ be the dealer, $\mathcal{P}=\{P_{1},P_{2},\cdots,P_{n}\}$ be the collection of participants, and $\mathcal{S}=\{S_{1},S_{2},\cdots,S_{k}\}$ be a set of secrets. $S_{i}$ is from $\mathbb{F}_{q}^{l_{i}}$ where $l_{i}$ ($0< l_{i}\leq n$) is the threshold corresponding to every $S_{i}$ $(1\leq i\leq k)$, and $q$ is a big prime number. It should be noted that all matrix operations are performed on $\mathbb{F}_{q}$. Especially, the $ILR$ relation used here is a special case of relation (5) which is a Type-$l$ $ILR$, and we set $t=1$.

\subsection{Setup phase} 
(1) $\mathcal{D}$ randomly selects $sh_{j}\in \{0,1\}^{r}$ $(j=1,2,\cdots,n)$ as $P_{j}$'s share such that $sh_{i}\neq sh_{j}$ for $i\neq j$ where $r\geq \max (l\log l, \log n)$ with $l=\max\{l_{1},l_{2},\cdots,l_{k}\}$, and distributes the share vector $sh_{j}$ to the participant $P_{j}$ via a secure channel.

(2) $\mathcal{D}$ chooses uniformly random matrices $G_{i}\in \mathbb{F}_{q}^{l_{i} \times r}$ for $i=1,2,\cdots,k$.

(3) $\mathcal{D}$ chooses a uniformly random matrix $F\in \mathbb{F}_{q}^{l\times r}$, and computes $h_{j}=F\cdot sh_{j}$ for $j=1,2,\cdots,n$ by using Ajtai's function.

(4) $\mathcal{D}$ computes $H(S_{i})$ for $i=1,2,\cdots,k$, where $H(\cdot)$ is any public one-way hash function.

(5) $\mathcal{D}$ publishes $G_{1},G_{2},\cdots,G_{k},F,h_{1},h_{2},\cdots,h_{n},H(S_{1}),H(S_{2}),\cdots,H(S_{k})$ on the bulletin board.

\subsection{Construction phase} 

$\mathcal{D}$ generates the subshadow $u_{i,j}$ for each secret $S_{i}$ $(1\leq i\leq k)$ and each participant $P_{j}$ $(1\leq j\leq n)$, and extra $u_{i,n+1}, u_{i,n+2}, \cdots, u_{i,n+l_{i}+1}$ for each secret $S_{i}$ $(1\leq i\leq k)$ as follows:

(1) Randomly chooses a constant vector $c\in \mathbb{F}_{q}^{l}$, where $l=\max \{l_{1},l_{2},\cdots,l_{k}\}$.

(2) For $i=1,2,\cdots,k$, computes shadow $d_{j}^{i}=G_{i}\cdot sh_{j}$ for $j=1,2,\cdots, l_{i}-1$ by using Ajtai's function.

(3) For $i=1,2,\cdots,k$, considers an inhomogeneous linear recurring sequence $\{u_{i,j}\}_{j\geq 0}$ defined by the following formula:
$$[****]_{t=1}=\left\{
\begin{aligned}
& u_{i,0}=S_{i},u_{i,1}=d_{1}^{i},\cdots,u_{i,l_{i}-1}=d_{l_{i}-1}^{i},\\
& \sum_{\lambda=0}^{l_{i}}\left( {\begin{array}{*{20}{ccc}}
	l_{i}\\
	\lambda
	\end{array}} \right)u_{i,j+l_{i}-\lambda}=(-1)^{j}\frac{(j)_{l_{i}}}{l_{i}!}c_{[l_{i}]}\quad (j\geq 0),
\end{aligned}
\right.$$
where $c_{[l_{i}]}$ denotes the first $l_{i}$ components of the vector $c$, and calculates $u_{i,j}$ for $1\leq i\leq k$ and $l_{i}\leq j\leq n+l_{i}+1$.

(4) Calculates $r_{i,j}=u_{i,j}-d_{j}^{i}$ for $1\leq i\leq k$ and $l_{i}\leq j\leq n$.

(5) Publishes $c$, $r_{i,j}$ for $1\leq i\leq k$, $l_{i} \leq j\leq n$, $u_{i,n+1}, u_{i,n+2}, \cdots, u_{i,n+l_{i}+1}$ for $1\leq i\leq k$.

\subsection{Recovery phase} 

When the participant $P_{j}$ receives his share $sh_{j}$, $1\leq j\leq n$, he examines whether the following equation holds:
$$F\cdot sh_{j}\overset{?}{=}h_{j},\qquad 1\leq j\leq n.$$
If the shares are proved to be authentic, any authorized participants can recover the secrets with any order by the following two ways according to different conditions they meet.\\

$\textbf{Way\: 1}$: For $1\leq i\leq k$, assume that any $l_{i}$ participants $\{P_{j}\}_{j\in I_{i}}$ $(I_{i}=\{i_{1},i_{2},\cdots,i_{l_{i}}\}$ $\subseteq \{1,2,\cdots,n\})$ reconstruct the secret $S_{i}$ together. At first, these participants collect $l_{i}$ shadows $\{d_{j}^{i}\}_{j\in I_
{i}}$ to calculate $l_{i}$ subshadows as follows:
$$u_{i,j}=\left\{
\begin{aligned}
& d_{j}^{i},\qquad\quad\:\, if\quad 1\leq j\leq l_{i}-1;\\
& d_{j}^{i}+r_{i,j},\quad if\quad l_{i}\leq j\leq n.
\end{aligned}
\right.$$

Method 1. They utilize $l_{i}$ subshadows $\{u_{i,j}|j\in I_{i}\}$ and other $l_{i}+1$ terms $\{u_{i,n+1}$, $u_{i,n+2},\cdots,u_{i,n+l_{i}+1}\}$ to solve the following Vandermond equations for $1\leq i\leq k$ and $1\leq s\leq l_{i}$:

\begin{footnotesize}
$$    
\left(                 
  \begin{array}{ccccc}   
    1 & i_{1} & i_{1}^{2} \cdots & i_{1}^{2l_{i}}\\  
    1 & i_{2} & i_{2}^{2} \cdots & i_{2}^{2l_{i}}\\
    \vdots & \vdots & \vdots & \vdots \\
    1 & i_{l_{i}} & i_{l_{i}}^{2} \cdots & i_{l_{i}}^{2l_{i}}\\ 
    1 & n+1 & (n+1)^{2} \cdots & (n+1)^{2l_{i}} \\ 
    1 & n+2 & (n+2)^{2} \cdots & (n+2)^{2l_{i}} \\ 
     \vdots & \vdots & \vdots & \vdots \\
    1 & n+l_{i}+1 & (n+l_{i}+1)^{2} \cdots & (n+l_{i}+1)^{2l_{i}} \\ 
  \end{array}
\right)\quad
\left(                 
  \begin{array}{c}   
    A_{i,0}^{[s]}\\  
    A_{i,1}^{[s]}\\ 
    \vdots\\
    A_{i,l_{i}-1}^{[s]}\\ 
    A_{i,l_{i}}^{[s]}\\
    A_{i,l_{i}+1}^{[s]}\\ 
    \vdots\\
    A_{i,2l_{i}}^{[s]}
  \end{array}
\right)=
\left(                 
  \begin{array}{c}   
    u_{i,i_{1}}^{[s]}\\  
    u_{i,i_{2}}^{[s]}\\  
    \vdots\\ 
    u_{i,i_{l_{i}}}^{[s]}\\ 
    u_{i,n+1}^{[s]}\\ 
    u_{i,n+2}^{[s]}\\ 
    \vdots\\ 
    u_{i,n+l_{i}+1}^{[s]} 
  \end{array}
\right),                 
$$
\end{footnotesize}

\noindent where $u_{i,j}^{[s]}$ ($j\in I_{i}'=I_{i}\cup \{n+1,n+2,\cdots,n+l_{i}+1\}$) means the $s$th component of vector $u_{i,j}$, and $A_{i,v}^{[s]}$ ($0\leq v\leq 2l_{i}$) denotes corresponding coefficient in the general term of $u_{i,j}^{[s]}$ as in Theorem 2.

Then, they obtain $A_{i,0}^{[s]},A_{i,1}^{[s]},\cdots,A_{i,2l_{i}}^{[s]}$ respectively, and further the general expression of $u_{i,j}^{[s]}$ in $[****]_{t=1}$ according to the Theorem 2:
$$u_{i,j}^{[s]}=(-1)^{j}p_{i}^{[s]}(j)=(-1)^{j}(A_{i,0}^{[s]}+A_{i,1}^{[s]}j+\cdots+A_{i,2l_{i}}^{[s]}j^{2l_{i}}).$$
Consequently,
$$S_{i}^{[s]}=u_{i,0}^{[s]}=(-1)^{0}p_{i}^{[s]}(0)=A_{i,0}^{[s]}.$$
It can be seen from the above, these authorized participants can restore the secret $S_{i}$ through $A_{i,0}^{[s]}$ $(1\leq s\leq l_{i})$ directly.

Method 2. For $1\leq i\leq k$, utilizing $2l_{i}+1$ pairs $\{(j,u_{i,j})|j \in I_{i}\}$ and published $\{(n+j,u_{i,n+j})|j=1,2,\cdots,l_{i}+1\}$, they can restore the shared secret $S_{i}$ by the Lagrange Interpolation Formula:
$$S_{i}^{[s]}=(-1)^{0}p_{i}^{[s]}(0)=\sum_{j\in I_{i}'}u_{i,j}^{[s]}\prod_{y\in I_{i}', y\neq j}\frac{y}{y-j}\quad (1\leq s\leq l_{i}),$$
where $I_{i}'=I_{i}\cup \{n+1,n+2,\cdots,n+l_{i}+1\}=\{i_{1},i_{2},\cdots,i_{l_{i}}\}\cup \{n+1,n+2,\cdots,n+l_{i}+1\}$.

$\textbf{Way\: 2}$: For $1\leq i\leq k$, we assume that $l_{i}$ participants $\{P_{j},P_{j+1},\cdots,P_{j+l_{i}-1}\}$ $(1\leq j\leq n-l_{i}+1)$ collaborate to reconstruct the secrets. They pool their shadows $\{d_{m}^{i}\}_{m\in I_{i}}$ $(I_{i}=\{j,j+1,\cdots,j+l_{i}-1\})$ to calculate $l_{i}$ successive subshadows as follows:
$$u_{i,m}=\left\{
\begin{aligned}
& d_{m}^{i},\qquad\quad\:\, if\quad 1\leq m\leq l_{i}-1;\\
& d_{m}^{i}+r_{i,m},\quad if\quad l_{i}\leq m\leq n.
\end{aligned}
\right.$$

For $1\leq i\leq k$, since $c$ has been released, they can use the formulas below to calculate $u_{i,j-1},u_{i,j-2},$ $\cdots,u_{i,0}$ one by one:
$$u_{i,m}=(-1)^{m}\frac{(m)_{l_{i}}}{l_{i}!}c_{[l_{i}]}-\sum_{\lambda=0}^{l_{i}-1}\left( {\begin{array}{*{20}{ccc}}
	l_{i}\\
	\lambda
	\end{array}} \right)u_{i,m+l_{i}-\lambda}\qquad (0\leq m< j).$$
Consequently, they have $u_{i,0}=S_{i}$ for $1\leq i\leq k$.
\subsection{Verification phase}
After recovery of the secrets, these participants can use the values of $H(S_{i})$ $(1\leq i\leq k)$ which have been published on the bulletin board to test the correctness of the shared secrets.

\section{Security analysis}

According to the definition of the security, then we prove the security of our new MSSSAOs.

\subsection{Correctness}

If the dealer and participants behave honestly, utilizing the two ways mentioned in the recovery phase of our schemes, for $1\leq i\leq k$, any $t_{i}$ or $l_{i}$ participants can restore corresponding shared secret $S_{i}$ independently.

\subsection{Verifiability}

From the description of the verification phase, our schemes can verify the correctness of recovered secrets easily making full use of the property of hash function.

\subsection{Privacy}
Next, we will show that the adversary cannot get any information about the unrecovered secrets from corresponding shadows of the recovered secrets and the public messages, respectively. Finally, we prove that any number of participants less than the threshold cannot restore the secrets. Notice that we let the threshold corresponding to $S_{i}$ be $t_{i}$ for convenience in the discussion below. When the threshold is $l_{i}$, the conclusions below are similar.

Before we provide the first consequence, we give three lemmas at first.

\begin{lemma}
Let $A$ be a uniformly random matrix selected from $\mathbb{F}_{q}^{t\times r}$. Denote $\mathfrak{B}$ the set of invertible matrices from $\mathbb{F}_{q}^{r\times r}$. Let $B$ be a random matrix selected from $\mathfrak{B}$. Then, $C=A\cdot B$ is a uniformly random matrix from $\mathbb{F}_{q}^{t\times r}$.
\end{lemma}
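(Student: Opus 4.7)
The plan is to reduce the claim to the elementary observation that right-multiplication by a fixed invertible matrix is a bijection on $\mathbb{F}_{q}^{t\times r}$. First I would fix an arbitrary $B_{0}\in\mathfrak{B}$ and define the map $\phi_{B_{0}}\colon\mathbb{F}_{q}^{t\times r}\to\mathbb{F}_{q}^{t\times r}$ by $\phi_{B_{0}}(A)=A\cdot B_{0}$. Since $B_{0}$ is invertible, $\phi_{B_{0}}$ has the explicit two-sided inverse $X\mapsto X\cdot B_{0}^{-1}$, so it is a bijection on the finite set $\mathbb{F}_{q}^{t\times r}$.

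Next I would use the standard fact that any bijection on a finite set pushes the uniform distribution forward to the uniform distribution. Applied to $\phi_{B_{0}}$ and the hypothesis that $A$ is uniform on $\mathbb{F}_{q}^{t\times r}$, this shows that for every fixed $B_{0}\in\mathfrak{B}$, the conditional distribution of $C=A\cdot B$ given $B=B_{0}$ is uniform on $\mathbb{F}_{q}^{t\times r}$. Since this conditional distribution does not depend on the particular value $B_{0}$, the marginal distribution of $C$ is again uniform: for every $M\in\mathbb{F}_{q}^{t\times r}$,
\[
\Pr[C=M]=\sum_{B_{0}\in\mathfrak{B}}\Pr[B=B_{0}]\cdot\Pr[A\cdot B_{0}=M\mid B=B_{0}]=\sum_{B_{0}\in\mathfrak{B}}\Pr[B=B_{0}]\cdot q^{-tr}=q^{-tr}.
\]

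I do not anticipate a real obstacle here; the argument is structural and does not depend on the precise distribution of $B$ on $\mathfrak{B}$. The only subtlety worth flagging in the write-up is that the conditioning step tacitly uses independence of $A$ and $B$: if $B$ were allowed to depend on $A$, then $C=A\cdot B$ could be forced to be deterministic, so the lemma would fail. Under the implicit independence assumption of the statement, the bijection argument delivers the uniform distribution of $C$ immediately.
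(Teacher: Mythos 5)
Your proposal is correct and is essentially the paper's own argument: the paper likewise conditions on the value $b$ of $B$, rewrites the event $\{A\cdot B=c\}$ as $\{A=c\cdot b^{-1}\}$ (your bijection step), and concludes from the uniformity of $A$ that $\Pr\{C=c\}=\frac{1}{q^{t\times r}}\sum_{b\in\mathfrak{B}}\Pr\{B=b\}=\frac{1}{q^{t\times r}}$. Your explicit flagging of the tacit independence of $A$ and $B$ is a reasonable clarification of an assumption the paper uses without comment.
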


\begin{proof}
Given $C=A \cdot B$, we get the following consequence:
$$Pr\{C=c\}=Pr\{A\cdot B=c\}=\sum_{b\in \mathfrak{B}}Pr\{A=c\cdot b^{-1}\}\cdot Pr\{B=b\}$$
$$=\frac{1}{q^{t\times r}}\sum_{b\in \mathfrak{B}}Pr\{B=b\}=\frac{1}{q^{t\times r}},$$
where $b^{-1}$ denotes the inverse matrix of the matrix $b$. Therefore, $C=A\cdot B$ is a uniformly random matrix from $\mathbb{F}_{q}^{t\times r}$.

\end{proof}

\begin{lemma}
Let $G$ be a uniformly random matrix chosen from $\mathbb{F}_{q}^{t\times r}$ ($r\geq t\log t$), then the matrix $G$ has full row rank with a non-negligible probability. More specifically, the probability of a uniformly random matrix $G\in \mathbb{F}_{q}^{t\times r}$ ($r\geq t\log t$) without full row rank can be expressed as a negligible function of $t$.

\end{lemma}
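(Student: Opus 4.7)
The plan is to compute the probability that $G$ has full row rank by sampling its rows one at a time and tracking the event that a new row fails to be linearly independent of those already drawn. Since a uniformly random $G\in\mathbb{F}_q^{t\times r}$ is equivalently generated as the matrix whose $t$ rows are independent uniform samples from $\mathbb{F}_q^r$, it suffices to bound, step by step, the probability of falling inside the linear span built so far.

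Concretely, after $i-1$ rows have been drawn, their linear span in $\mathbb{F}_q^r$ contains at most $q^{i-1}$ vectors, so conditionally the $i$th row lands in that span with probability at most $q^{i-1}/q^r = q^{i-1-r}$. Chaining these estimates gives
\begin{equation*}
\Pr\bigl[G \text{ has full row rank}\bigr] \;\ge\; \prod_{i=1}^{t}\bigl(1-q^{i-1-r}\bigr),
\end{equation*}
and the elementary inequality $1-\prod_i(1-x_i)\le\sum_i x_i$ for $x_i\in[0,1]$ yields
\begin{equation*}
\Pr\bigl[G \text{ does not have full row rank}\bigr] \;\le\; \sum_{i=1}^{t} q^{i-1-r} \;=\; \frac{q^{t}-1}{(q-1)\,q^{r}} \;<\; \frac{q^{t-r}}{q-1}.
\end{equation*}

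The final step is to substitute the hypothesis $r\ge t\log t$. Then $q^{t-r}\le q^{t(1-\log t)}$, whose exponent becomes negative once $t$ exceeds the base of the logarithm and grows in absolute value faster than any constant multiple of $\log t$, so the bound decays faster than $t^{-c}$ for every constant $c$. This is precisely the statement that the failure probability is a negligible function of $t$, and therefore the success probability is overwhelming (in particular, non-negligible), which is what the lemma asserts.

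I do not anticipate a genuine obstacle here: the row-by-row linear-independence count is a standard exercise in finite-field combinatorics, and the only small care-point is ensuring the base of the logarithm is treated consistently with $q$ so that the tail bound is quantitative. Taking $\log=\log_q$, or equivalently using $q\ge 2$ together with $\log=\log_2$ and absorbing the constant, makes this completely explicit and closes the argument.
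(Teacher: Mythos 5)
Your proof is correct and follows essentially the same route as the paper: both rest on the bound $\Pr[\text{full row rank}]\geq\prod_{j=1}^{t}\bigl(1-q^{j-1-r}\bigr)$, convert it via $1-\prod_j(1-x_j)\leq\sum_j x_j$ into a failure probability, and conclude negligibility from $r\geq t\log t$. The only (harmless) differences are that you derive the product bound by the row-by-row span argument instead of citing the handbook formula, and you close the estimate with an exact geometric-series bound where the paper approximates the sum by its dominant term.
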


\begin{proof}
From the formula in \cite{MP13}, we know that for a uniformly random matrix $G$ selected from $\mathbb{F}_{q}^{t\times r}$ ($r\geq t\log t$), the probability of this matrix with full row rank is:
$$Pr(t,r)=\prod_{j=1}^{t}(1-q^{-(r+1-j)})>1-\sum_{j=1}^{t}q^{-(r+1-j)}$$
$$\geq 1-\sum_{j=1}^{t}q^{-(t\log t+1-j)}.$$
The last inequality holds because $r\geq t\log t$.

Hence, the matrix $G$ without full row rank has the probability as follows:
$$1-Pr(t,r)\leq \sum_{j=1}^{t}q^{-(t\log t+1-j)}\approx q^{-(t\log t+1-t)}
=negl(t),$$
where $negl(t)$ is a negligible function of $t$.

\end{proof}

\begin{lemma}
$G_{1}$ and $G_{2}$ are two uniformly random matrices selected from $\mathbb{F}_{q}^{t_{1}\times r}$ and $\mathbb{F}_{q}^{t_{2}\times r}$, where $r\geq \max\{t_{1}\log t_{1},t_{2}\log t_{2}\}$, and let $\mathbf{y}$ be randomly distributed on $\{0,1\}^{r}$. Then it is impossible for any adversary to get $G_{2}\mathbf{y}$ from $G_{1}\mathbf{y}$, or get $G_{1}\mathbf{y}$ from $G_{2}\mathbf{y}$ in polynomial time.

\end{lemma}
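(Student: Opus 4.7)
The plan is to prove this by reduction to the one-wayness of Ajtai's function stated in Definition 8. Assume, for contradiction, that there is a PPT adversary $\mathcal{A}$ which on input $(G_1,G_2,G_1\mathbf{y})$, with $G_1,G_2$ distributed as in the hypothesis and $\mathbf{y}$ uniform on $\{0,1\}^r$, outputs $G_2\mathbf{y}$ with non-negligible probability $\epsilon=\epsilon(r)$. I will construct a PPT algorithm $\mathcal{B}$ that inverts $f_{G_1}$ with non-negligible probability, contradicting Definition 8.

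The crucial ingredient is a shift step, conceptually parallel to Lemma 1: for any fixed $M\in\mathbb{F}_q^{t_2\times r}$, if $R$ is chosen uniformly in $\mathbb{F}_q^{t_2\times r}$, then $R'=R+M$ is also uniform. Given an Ajtai challenge $(G_1,\mathbf{z}=G_1\mathbf{y})$, $\mathcal{B}$ proceeds as follows: sample a uniform $R$, query $\mathcal{A}(G_1,R,\mathbf{z})$ and $\mathcal{A}(G_1,R+M,\mathbf{z})$ to obtain candidates for $R\mathbf{y}$ and $(R+M)\mathbf{y}$, and subtract to recover $M\mathbf{y}$ for the chosen $M$. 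Since each of the two matrices supplied to $\mathcal{A}$ is marginally uniform, each call succeeds with probability $\epsilon$ when averaged over $R$. Choosing $k=\lceil r/t_2\rceil$ matrices $M_1,\ldots,M_k$ whose vertical concatenation has full column rank $r$ (for example, the disjoint $t_2$-row blocks of $I_r$, padded with zero rows if necessary), $\mathcal{B}$ assembles a full-rank linear system $\widetilde{M}\mathbf{y}=\mathbf{w}$ and solves it uniquely over $\mathbb{F}_q$. The recovered $\mathbf{y}$ is then verified cheaply by the checks $\mathbf{y}\in\{0,1\}^r$ and $G_1\mathbf{y}=\mathbf{z}$. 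Lemma 2 guarantees that the rank assumptions used here hold with overwhelming probability, since $r\geq\max\{t_1\log t_1,t_2\log t_2\}$.

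The main technical obstacle is that $\epsilon$ is only non-negligible, not overwhelming, so the naive probability that all $2k$ queries simultaneously return correct values may be too small. This is handled by standard amplification: by an averaging argument, a non-negligible fraction of Ajtai instances $(G_1,\mathbf{y})$ are ``good'' in the sense that $\mathcal{A}$ succeeds for at least an $\epsilon/2$-fraction of $G_2$'s, and on such instances $\mathcal{B}$ repeats each shift step with $\mathrm{poly}(1/\epsilon)$ fresh independent samples of $R$, using the public check $G_1\mathbf{y}=\mathbf{z}$ to recognize a correct recovered $\mathbf{y}$ and to discard inconsistent ones. The total number of calls to $\mathcal{A}$ remains polynomial in $r$ and $1/\epsilon$, so $\mathcal{B}$ is PPT and inverts $f_{G_1}$ with non-negligible probability, contradicting the $n^c$-approximate SVP hardness assumption underlying Ajtai's function. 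The second direction of the claim, obtaining $G_1\mathbf{y}$ from $G_2\mathbf{y}$, follows by exactly the same argument with the roles of $G_1$ and $G_2$ swapped.
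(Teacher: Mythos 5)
Your overall strategy (reduce to the one-wayness of Ajtai's function from Definition 8) matches the paper's intent, but your reduction is structurally different from the paper's and, as written, has a genuine gap in the amplification step. The subtraction trick needs the adversary $\mathcal{A}$ to answer correctly on \emph{both} correlated queries $(G_{1},R,\mathbf{z})$ and $(G_{1},R+M,\mathbf{z})$ for the \emph{same} hidden $\mathbf{y}$ and a \emph{fixed} difference $M$. Marginal uniformity of $R$ and $R+M$ only gives each call success probability about $\epsilon/2$ on a good instance; it gives no lower bound on the joint event. Writing $T$ for the set of second matrices on which $\mathcal{A}$ succeeds for this instance, you need $T\cap(T-M)\neq\emptyset$ with noticeable density, and an adversary whose success set $T$ is, say, an interval-type set in one linear functional of $G_{2}$ can have $T\cap(T-M)=\emptyset$ for exactly your block matrices $M_{i}$ even though $|T|\approx q^{t_{2}r}/2$. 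Repeating with $\mathrm{poly}(1/\epsilon)$ fresh samples of $R$ does not repair this, because the intermediate values $R\mathbf{y}$ and $M_{i}\mathbf{y}$ are not publicly verifiable: the only check you have, $G_{1}\mathbf{y}=\mathbf{z}$ with $\mathbf{y}\in\{0,1\}^{r}$, applies to a fully assembled candidate, so with unverifiable and possibly wrong block values you are left either searching over combinations of candidates (not polynomial) or demanding that about $r/t_{2}$ correlated pair-queries all succeed, whose probability you cannot bound by $\epsilon$ alone. This joint-success issue is the missing idea, and it is precisely what the paper's proof sidesteps.

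The paper's argument makes a \emph{single} call to $\mathcal{A}$: given the Ajtai challenge matrix $E_{1}$ and image $E_{1}\mathbf{x}$, it fixes $G_{2}$, computes its pseudo-inverse $G_{2}^{\dagger}$ and a null-space matrix $H_{2}$ (full row rank holding by Lemma 3), sets $G_{1}=E_{1}[G_{2}^{\dagger}\;H_{2}]^{-1}$ (uniform by Lemma 2), and defines $\mathbf{y}=[G_{2}^{\dagger}\;H_{2}]\mathbf{x}$ so that $G_{1}\mathbf{y}=E_{1}\mathbf{x}$ and $G_{2}\mathbf{y}=\mathbf{x}_{1}$, the first block of the preimage. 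One successful run of $\mathcal{A}$ then already yields part of the Ajtai preimage, so there is no need to combine several answers and no joint-success probability to control. If you want to salvage your route, you would need either a way to verify individual answers $R\mathbf{y}$ (which the public data does not provide) or a randomized choice of differences together with an argument that correct pairs occur and can be isolated; absent that, the reduction as proposed does not establish the lemma.
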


\begin{proof}
The lemma is proved by reduction to absurdity. Suppose that there is an algorithm $\mathcal{A}$ which makes it possible for any adversary to get $G_{2}\mathbf{y}$ from $G_{1}\mathbf{y}$ in polynomial time with a non-negligible probability. Let $\mathcal{B}$ denote another algorithm which can invert the Ajtai's function with a non-negligible probability. From the conclusion in \cite{A96}, if this algorithm $\mathcal{B}$ exists, it means that the adversary can solve an $n^{c}$-approximate SVP with a non-negligible probability, which is impossible with today's computing power. Therefore, in order to prove the conclusion, we just need to prove that the adversary can perform the algorithm $\mathcal{B}$ by utilizing the algorithm $\mathcal{A}$.

Firstly, a uniformly random matrix $G_{2}$ is selected from $\mathbb{F}_{q}^{t_{2}\times r}$ by $\mathcal{B}$. Then, $\mathcal{B}$ can obtain a matrix $H_{2}\in \mathbb{F}_{q}^{r\times (r-t_{2})}$ satisfying the equation $G_{2}H_{2}=0_{t_{2}\times (r-t_{2})}$. Later, $\mathcal{B}$ calculates $G_{1}=E_{1}[G_{2}^{\dagger}\: H_{2}]^{-1}$, where $E_{1}$ is a uniformly random matrix from $\mathbb{F}_{q}^{t_{1}\times r}$ and $G_{2}^{\dagger}\in \mathbb{F}_{q}^{r\times t_{2}}$ is the pseudo-inverse of $G_{2}$.

Secondly, we will prove that $G_{1}$ is also a uniformly random matrix from $\mathbb{F}_{q}^{t_{1}\times r}$. According to Lemma 3, we know that the probability of $G_{2}$ with full row rank is non-negligible. Hence, $G_{2}^{\dagger}=G_{2}^{T}(G_{2}G_{2}^{T})^{-1}$. Next, we show that $[G_{2}^{\dagger}\: H_{2}]$ is invertible. Primarily, $G_{2}^{\dagger}$ and $H_{2}$ are matrices with full column rank. Besides, the columns of $G_{2}^{\dagger}$ cannot be written in the form of the linear representation of the columns of $H_{2}$, which is because that $G_{2}G_{2}^{\dagger}=I_{t_{2}}$ and $G_{2}H_{2}=0_{t_{2}\times (r-t_{2})}$, where $I_{t_{2}}$ denotes the identity matrix of order $t_{2}$. Thus, $[G_{2}^{\dagger}\: H_{2}]$ is invertible. Since $E_{1}$ is a uniformly random matrix, then the product $G_{1}=E_{1}[G_{2}^{\dagger}\: H_{2}]^{-1}$ is also a uniformly random matrix from $\mathbb{F}_{q}^{t_{1}\times r}$ due to Lemma 2.

Thirdly, we will prove that if the algorithm $\mathcal{A}$ exists, then the algorithm $\mathcal{B}$ exists. Let $\mathbf{y}=[G_{2}^{\dagger}\: H_{2}]\mathbf{x}$, where $\mathbf{x}\in \{0,1\}^{r}$ is a random vector. Since $G_{1}=E_{1}[G_{2}^{\dagger}\: H_{2}]^{-1}$, then $E_{1}=G_{1}\cdot [G_{2}^{\dagger}\: H_{2}]$ and therefore $E_{1}\mathbf{x}=G_{1}\cdot [G_{2}^{\dagger}\: H_{2}]\mathbf{x}=G_{1}\mathbf{y}$. Besides, we know that the algorithm $\mathcal{A}$ can output $G_{2}\mathbf{y}$ based on input $G_{1}\mathbf{y}$ in polynomial time with a non-negligible probability. Consequently, input $G_{1}\mathbf{y}$, i.e., $E_{1}\mathbf{x}$ and $E_{1}$, and the algorithm $\mathcal{B}$ can output $G_{2}\mathbf{y}$ with the help of the algorithm $\mathcal{A}$. And in fact, $G_{2}\mathbf{y}$ satisfies the following equation:
$$G_{2}\mathbf{y}=G_{2}[G_{2}^{\dagger}\: H_{2}]\mathbf{x}=[I_{t_{2}}\: 0_{t_{2}\times (r-t_{2})}]\mathbf{x}=\mathbf{x_{1}},$$
where $\mathbf{x}=\left[                 
  \begin{array}{c}   
  \mathbf{x_{1}}_{t_{2}\times 1}\\  
  \mathbf{x_{2}}_{(r-t_{2})\times 1}
  \end{array}
\right]$.
In other words, if the adversary takes $E_{1}\mathbf{x}$ and $E_{1}$ as inputs, he can obtain the first part of $\mathbf{x}$ by performing the algorithm $\mathcal{B}$. That is to say, the adversary can invert the Ajtai's function, in contradiction to the the conclusion in \cite{A96} as we explained at the beginning.

In conclusion, the algorithm $\mathcal{A}$ does not exist. Therefore, it is impossible for any adversary to get $G_{2}\mathbf{y}$ from $G_{1}\mathbf{y}$ in polynomial time. Similarly, it is also impossible to get $G_{1}\mathbf{y}$ from $G_{2}\mathbf{y}$ in polynomial time.

\end{proof}

\begin{theorem}

The adversary cannot get any information about the unrecovered secrets from corresponding shadows of the recovered secrets with a non-negligible probability.

\end{theorem}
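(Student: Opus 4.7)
The plan is to reduce directly to Lemma~4 (the hardness of computing $G_{2}\mathbf{y}$ from $G_{1}\mathbf{y}$ for independent uniform $G_{1}, G_{2}$). Let $S_{i}$ denote a recovered secret and $S_{i'}$ an unrecovered one. The shadows the adversary observes are exactly $\{d_{j}^{i} = G_{i}\cdot sh_{j}\}_{j\in I_{i}}$, while the shadows needed to attack $S_{i'}$ are $\{d_{j}^{i'} = G_{i'}\cdot sh_{j}\}$. The goal is therefore to show that no polynomial-time adversary, given the first collection, can compute (with non-negligible probability) enough of the second collection to recover $S_{i'}$.

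First I would fix a single participant index $j$ and argue that the pair $(G_{i}\cdot sh_{j},\, G_{i'}\cdot sh_{j})$ satisfies the hypotheses of Lemma~4: by construction in the Setup phase, $G_{i}$ and $G_{i'}$ are sampled uniformly and independently from $\mathbb{F}_{q}^{t_{i}\times r}$ and $\mathbb{F}_{q}^{t_{i'}\times r}$ respectively, and $sh_{j}$ is sampled uniformly from $\{0,1\}^{r}$ with $r \geq t\log t$. Lemma~4 then directly yields that computing $G_{i'}\cdot sh_{j}$ from $G_{i}\cdot sh_{j}$ in polynomial time succeeds with only negligible probability; otherwise one could invert Ajtai's function, contradicting \cite{A96}.

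Next I would lift this single-participant statement to the full view of the adversary. Since the shares $sh_{1},\dots,sh_{n}$ are sampled independently and the matrices $G_{1},\dots,G_{k}$ are independent of each other, the collection of recovered shadows $\{G_{i}\cdot sh_{j}\}_{j\in I_{i}}$ partitions into independent instances of the Lemma~4 setup across different $j$'s. A standard hybrid argument shows that recovering even one unrecovered shadow $G_{i'}\cdot sh_{j}$ from the joint view still reduces to a single instance of the hardness assumption. Since every recovery method in Way~1 and Way~2 of our schemes requires at least $t_{i'}$ (or $l_{i'}$) shadows of $S_{i'}$, the adversary cannot reconstruct the secret, and in particular cannot extract any efficiently computable information about it.

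The main obstacle I anticipate is handling the auxiliary public material that the adversary also sees, namely $F$, $h_{j}=F\cdot sh_{j}$, the constant vector(s) $c_{i}$ or $c$, the masks $r_{i,j}=u_{i,j}-d_{j}^{i}$ for $j\geq t_{i}$, the boundary values $u_{i,n+1},\dots,u_{i,n+l_{i}+1}$, and $H(S_{i'})$. One must check that none of this leaks shadows of $S_{i'}$: the verification data $h_{j}=F\cdot sh_{j}$ is another Ajtai image and is handled by applying Lemma~4 with $(F, G_{i'})$ in place of $(G_{1}, G_{2})$; the masks $r_{i,j}$ involve only the $i$th sequence and are independent of $G_{i'}$; the boundary terms $u_{i,n+s}$ likewise depend only on sequence $i$; and $H(S_{i'})$ is protected by the one-wayness of $H(\cdot)$. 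Once each public quantity is checked to be simulatable from the recovered view without further knowledge of $G_{i'}\cdot sh_{j}$, the reduction goes through and the theorem follows.
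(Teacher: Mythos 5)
Your proposal is correct and takes essentially the same route as the paper: the paper's proof of Theorem 3 is precisely a direct invocation of Lemma 4, applied to the pair of shadows $G_{i}\cdot sh_{l}$ and $G_{j}\cdot sh_{l}$ of the same share under independently chosen uniform matrices, concluding that shadows of recovered secrets cannot yield shadows of unrecovered ones. Your extra steps (the hybrid argument over the joint view and the check that the remaining public values $F\cdot sh_{j}$, the masks $r_{i,j}$, the boundary terms, and $H(S_{i'})$ leak nothing) are more careful than the paper, which stops at the single-pair reduction and defers the role of the public values to its Theorem 4.
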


\begin{proof}
Combining Lemma 4 with our schemes, for $1\leq i\neq j \leq k$ and $1\leq l\leq n$, it is impossible for the adversary to get $G_{i}sh_{l}$ from $G_{j}sh_{l}$ in polynomial time where $G_{i}$ and $G_{j}$ are any two uniformly random matrices selected from $\mathbb{F}_{q}^{t_{i}\times r}$ and $\mathbb{F}_{q}^{t_{j}\times r}$ independently, and $sh_{l}\in \{0,1\}^{r}$ is the share of $P_{l}$ with $r\geq \max \{t_{i}\log t_{i},t_{j}\log t_{j}, \log n\}$. That is to say, the shadow of the secret $S_{i}$ cannot be obtained by the shadow of another different secret $S_{j}$. Therefore, the adversary cannot get any information about the unrecovered secrets from the corresponding shadows of recovered secrets with a non-negligible probability.

\end{proof}

By Theorem 3, in our proposed schemes, we know that the recovered secrets cannot leak any information of the unrecovered secrets with a non-negligible probability. Therefore, any authorized sets can restore these secrets independently with computational security, which means that our schemes are computationally secure multi-stage secret sharing schemes with any order (MSSSAO).

Finally, we have the following theorem.

\begin{theorem}
Any number of participants less than the threshold $t_{i}$ cannot restore the unrevealed secret $S_{i}$ for $1\leq i\leq k$.
\end{theorem}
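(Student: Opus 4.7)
The plan is to show that any coalition $I\subseteq\{1,2,\ldots,n\}$ of size strictly less than the threshold---$|I|<t_i$ in Schemes 1 and 2, $|I|<l_i$ in Schemes 3 and 4---obtains strictly fewer evaluations of the subshadow polynomial $p_i$ than are needed to pin it down, so that $S_i=p_i(0)$ remains computationally indistinguishable from a uniform element of $\mathbb{F}_q^{t_i}$ (respectively $\mathbb{F}_q^{l_i}$) in the coalition's view.

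First I would catalog the evaluations of $p_i$ the coalition can actually produce. From its own shares each member forms $d_j^i=G_i\cdot sh_j$ for $j\in I$ and then obtains $u_{i,j}$ either directly (when $j$ lies in the initial-value range) or by adding the public $r_{i,j}$. Combined with the publicly posted extras $u_{i,n+1},u_{i,n+2}$ (or $u_{i,n+1},\ldots,u_{i,n+l_i+1}$ in Schemes 3 and 4), this gives exactly $|I|+2$ or $|I|+l_i+1$ known values of $p_i$. I would then invoke Theorem 1 or Theorem 2 to recall that $p_i$ has degree $t_i+1$ in Schemes 1 and 2, and degree $2l_i$ in Schemes 3 and 4, so a Vandermonde reconstruction as in Way 1 needs $t_i+2$ or $2l_i+1$ independent evaluations.

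Second I would close off every side channel that could furnish an additional evaluation. For outside indices $j\notin I$ the coalition sees $r_{i,j}$ and $h_j=F\cdot sh_j$ but not $sh_j$ itself, and unmasking $u_{i,j}$ from $r_{i,j}$ requires $d_j^i=G_i\cdot sh_j$. Lemma 4 already rules out computing $G_i\cdot sh_j$ from $F\cdot sh_j$ in polynomial time under the hardness of Ajtai's function, so no additional evaluation of $p_i$ can be harvested from the public transcript.

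Finally, with the count of known evaluations pinned down at exactly one fewer than the Vandermonde dimension, the set of polynomials consistent with the coalition's data forms a one-dimensional affine family over $\mathbb{F}_q$, componentwise. As the free parameter traverses $\mathbb{F}_q$ the value $p_i(0)$ sweeps out every element, so $S_i$ is indistinguishable from uniform in the coalition's view up to the leakage through the published $H(S_i)$, which is negligible by the one-wayness of $H$. The step I expect to require the most care is this last one: one must verify that the evaluation functional $p\mapsto p(0)$ is not constant on the residual affine family, i.e.\ that appending the row for $x=0$ to the augmented Vandermonde matrix strictly increases its rank. This non-degeneracy holds because $0$ is distinct from every coalition index and every auxiliary index $n+1,\ldots$, so the enlarged Vandermonde minor is nonzero, which completes the argument.
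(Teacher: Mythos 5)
Your proposal is correct and follows essentially the same route as the paper's own proof: count the available evaluations of the subshadow polynomial (coalition members plus the published extra terms), invoke the hardness of inverting Ajtai's function (Lemma 4/Theorem 3) to rule out harvesting any further shadow, and conclude that the resulting underdetermined Vandermonde system leaves $S_{i}=p_{i}(0)$ undetermined, effectively uniform over $\mathbb{F}_{q}^{t_{i}}$. Your explicit check that evaluation at $0$ is nonconstant on the residual affine family, and your remark on the negligible leakage through $H(S_{i})$, are small refinements of details the paper leaves implicit rather than a different argument.
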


\begin{proof}
Without loss of generality, for $1\leq i\leq k$, we assume that $t_{i}-1$ participants conspire to reconstruct the unrevealed secret $S_{i}$ using their shadows and public messages in the $i$th-stage. On the one hand, according to Theorem 3, shadows corresponding to the secret $S_{i}$ cannot be computed from the shadows of revealed secrets in polynomial time with a non-negligible probability due to the one-wayness of Ajtai's function. On the other hand, to recover $S_{i}$, any $t_{i}-1$ participants cannot obtain $A_{i,0}\in \mathbb{F}_{q}^{t_{i}}$, which is because they cannot solve the linear system with $t_{i}(t_{i}+1)$ equations and $t_{i}(t_{i}+2)$ unknowns, i.e., for $1\leq s\leq t_{i}$,
$$      
\left(                 
  \begin{array}{ccccc}   
    1 & i_{1} & i_{1}^{2} \cdots & i_{1}^{t_{i}+1}\\  
    1 & i_{2} & i_{2}^{2} \cdots & i_{2}^{t_{i}+1}\\
    \vdots & \vdots & \vdots & \vdots \\
    1 & i_{t_{i}-1} & i_{t_{i}-1}^{2} \cdots & i_{t_{i}-1}^{t_{i}+1}\\ 
    1 & n+1 & (n+1)^{2} \cdots & (n+1)^{t_{i}+1} \\ 
    1 & n+2 & (n+2)^{2} \cdots & (n+2)^{t_{i}+1}  
  \end{array}
\right)\quad
\left(                 
  \begin{array}{c}   
    A_{i,0}^{[s]}\\  
    A_{i,1}^{[s]}\\ 
    \vdots\\
    A_{i,t_{i}-1}^{[s]}\\
    A_{i,t_{i}}^{[s]}\\
    A_{i,t_{i}+1}^{[s]} 
  \end{array}
\right)=
\left(                 
  \begin{array}{c}   
    u_{i,i_{1}}^{[s]}\\  
    u_{i,i_{2}}^{[s]}\\  
    \vdots\\ 
    u_{i,i_{t_{i}-1}}^{[s]}\\ 
    u_{i,n+1}^{[s]}\\ 
    u_{i,n+2}^{[s]} 
  \end{array}
\right),                 
$$
where there are $t_{i}$ degrees of freedom. Since $S_{i}=(A_{i,0}^{[1]},A_{i,0}^{[2]},\cdots,A_{i,0}^{[t_{i}]})^{T}$, we can regard these $t_{i}$ components of $S_{i}$ as free variables from $\mathbb{F}_{q}$. Thus, $S_{i}$ has a uniform distribution over $\mathbb{F}_{q}^{t_{i}}$, and no information about $S_{i}$ can be obtained from these $t_{i}-1$ shadows and public messages in the $i$th-stage. Consequently, these $t_{i}-1$ participants cannot restore the unrevealed secrets in polynomial time.
\end{proof}

\section{Performance analysis}
In this section, we discuss the memory consumption, time consumption, performance feature of some proposed schemes and our schemes. In order to compare these schemes, we set that there are $n$ participants, $k$ shared secrets, and the threshold is $t$. Notice that in this section we assume that $t=t_{1}=t_{2}=\cdots=t_{k}$ or $t=l_{1}=l_{2}=\cdots=l_{k}$ in our multi-stage secret sharing schemes. In all the table below, we use abbreviation OS to represent our schemes.

\subsection{Memory consumption}
At first, we compare some known schemes \cite{HD94,H95,M162,PE17} with our schemes from the perspective of memory consumption in the TABLE \uppercase\expandafter{\romannumeral 1}, where we list the number of public values in these schemes. Note that all the schemes here are multi-stage secret sharing schemes. Besides, the first three schemes can recover secrets in a predefined order, and the next few schemes can reconstruct secrets in any order.

And we take $(t,k,n)=(3,4,7),(6,7,10),(8,9,12),(11,12,14),(12,13,14)$ respectively in Fig. 1 to show the relations between the number of public values and the values of $(t,k,n)$ intuitively. As shown in Fig.1, generally speaking, the last three schemes need more memory consumption than the first three ones, which is because the last three schemes are all MSSSAOs and lattice-based secret sharing schemes. Later, we only compare the last three schemes. Among these three schemes, PE \cite{PE17} is the most efficient. Relatively speaking, OS 3 and OS 4 are less efficient than OS 1 and OS2, because they use different types of $ILR$ relations. Besides, the smaller the difference between $n$ and $t$, the more effective OS 1 and OS 2 are.

\begin{center}\scriptsize
\topcaption{\textbf{Comparison of the memory consumption in proposed schemes}
\label{Comparison of the amount of the public values in proposed schemes}}
\tablefirsthead{\hline\multicolumn{1}{|c|}{Scheme}&\multicolumn{2}{|c|}{Number of public values}
&\multicolumn{1}{|c|}{ Public values}\\\hline}
\tablelasttail{\hline}
\tablehead{\hline\multicolumn{4}{|c|}
{\small\sl continued from previous page}\\\hline
\multicolumn{1}{|c|}{Scheme}&\multicolumn{2}{|c|}{Number of public values}
&\multicolumn{1}{|c|}{Public values}\\\hline}
\tabletail{\hline\multicolumn{4}{|c|}{\small\sl continued on next page}\\\hline}

\begin{supertabular}{|p{0.1\textwidth}|p{0.1\textwidth}|p{0.15\textwidth}|p{0.59\textwidth}|}
\multirow{1}{*}{HD\cite{HD94}} & \multicolumn{2}{c|} {$kn$} &
 \makecell*[c]{$d_{ij}=z_{ij}-y_{j}=f^{i-1}(y_{j})-y_{j}$,\\
 $i=1,2,\cdots,k, j=1,2,\cdots,n$.}\\
\hline

\multirow{1}{*}{LH\cite{H95}} & \multicolumn{2}{c|} {$k(n-t)$} &
 \makecell*[c]{$h_{j}(m)$,\\
 $j=0,1,\cdots,k-1$,\\
 $m=1,2,\cdots,n-t$.}\\
\hline

\multirow{1}{*}{SM\cite{M162}} &
\multicolumn{2}{c|} {$k(n-t+2)+1$}
 & \makecell*[c]{$r,r_{ji}$\\
 $j=1,2,\cdots,k, t_{j}-1\leq i\leq n$}\\
\hline

\multirow{2}{*}{PE\cite{PE17}} & \multicolumn{2}{c|} {$2(n+k)+1$}
& \makecell*[c]{$v, A_{i}, i=1,2,\cdots,k, \lambda_{j},j=1,2,\cdots,n,$\\
  $h_{j},j=1,2,\cdots,n, H(s_{i}),i=1,2,\cdots,k.$}\\
\hline

\multirow{2}{*}{OS 1 \& 2} & \multicolumn{2}{c|} {$k(n-t+6)+(n+1)$}
& \makecell*[c]{$G_{1},G_{2},\cdots,G_{k},F, h_{1},h_{2},\cdots,h_{n},H(S_{1}),H(S_{2}),\cdots,H(S_{k})$\\
  $c_{1},c_{2},\cdots,c_{k},r_{i,j},1\leq i\leq k,\: t_{i}\leq j\leq n,$\\ $u_{i,n+1},u_{i,n+2},i=1,2,\cdots,k.$}\\
\hline

\multirow{2}{*}{OS 3\& 4} & \multicolumn{2}{c|} {$k(n+4)+(n+2)$}
& \makecell*[c]{$G_{1},G_{2},\cdots,G_{k},F, h_{1},h_{2},\cdots,h_{n},H(S_{1}),H(S_{2}),\cdots,H(S_{k})$\\
  $c,r_{i,j},1\leq i\leq k,\: l_{i}\leq j\leq n,$\\
  $u_{i,n+1},u_{i,n+2},\cdots,u_{i,n+l_{i}+1},1\leq i\leq k.$}\\
\hline
\end{supertabular}
\end{center}

\begin{figure}
  \centering
  \includegraphics[width=4in]{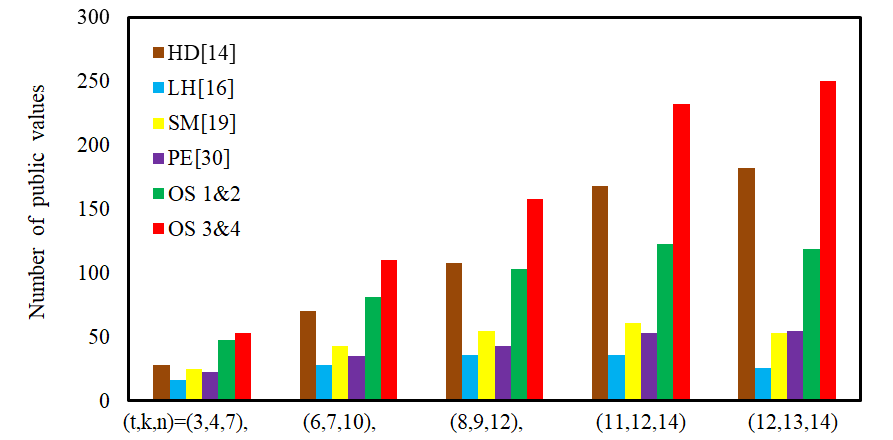}\\
  \caption{Memory consumption in proposed schemes}\label{Memory consumption of proposed schemes}
\end{figure}

\subsection{Time consumption}

In this subsection, we only compare the time consumption of our schemes with PE \cite{PE17} since they are both post-quantum multi-stage secret sharing schemes. Here, we only consider the lattice-based multi-stage secret sharing schemes. Because these two schemes are both multi-use secret sharing schemes, which means that the setup phase of these schemes needs to be performed only once, then we merely consider the time consumption of the other three phases, i.e., the construction phase, recovery phase, and verification phase. Notice that here $r_{1}\geq \max\{t\log t, n\}$ and $r_{2}\geq \max\{t\log t, \log n\}$.

\begin{center}\scriptsize
\topcaption{\textbf{Comparison of the time consumption in presented scheme}
\label{Comparison of the Computational complexity in presented scheme}}

\tablefirsthead{\hline\multicolumn{1}{|c|}{Scheme}
&\multicolumn{1}{|c|}{ Con}&\multicolumn{1}{|c|}{ Ver}&\multicolumn{1}{|c|}{ Rec}\\\hline}
\tablelasttail{\hline}
\tablehead{\hline\multicolumn{4}{|c|}
{\small\sl continued from previous page}\\\hline
\multicolumn{1}{|c|}{Scheme}
&\multicolumn{1}{|c|}{ Con}&\multicolumn{1}{|c|}{ Ver}&\multicolumn{1}{|c|}{ Rec}\\\hline}
\tabletail{\hline\multicolumn{4}{|c|}{\small\sl continued on next page}\\\hline}

\begin{supertabular}{|p{0.08\textwidth}|p{0.29\textwidth}|p{0.17\textwidth}|p{0.21\textwidth}|}

\multirow{1}{*}{PE\cite{PE17}}
 & \makecell*[c]{$O(n^{3})$} &\makecell*[c]{$O(tr_{1})$}& \makecell*[c]{$O(t^{3})$}\\

\hline
\multirow{1}{*}{OS}
& \makecell*[c]{$O(tr_{2})$}
&\makecell*[c]{$O(tr_{2})$}
&\makecell*[c]{Way1\:$O(t^{2})$\\ Way2\: $O(t)$}\\
\hline

\end{supertabular}
\end{center}

From the TABLE \uppercase\expandafter{\romannumeral 2}, since $t\leq n$ and generally $r_{1}\geq r_{2}$, we can know that our scheme is more efficient than PE \cite{PE17} in the construction phase and verification phase. What's more, in the recovery phase, the time complexity of PE \cite{PE17} is $O(t^{3})$. Nevertheless, according to different conditions, our schemes have different time cost. If any authorized set wants to restore the secrets, the time consumption of this phase in our schemes is $O(t^{2})$. Furthermore, if the indices of these participants are successive, the time complexity of this phase is reduced to only $O(t)$. From the point of view of the time consumption, our new proposed schemes are much more efficient than PE \cite{PE17}. Therefore, our schemes can be regarded as the improved versions of PE \cite{PE17}.

\subsection{Performance feature}
Finally, we analyze the performance features of the schemes in \cite{HD94,H95,DM15,M162,PE17} and our schemes in TABLE \uppercase\expandafter{\romannumeral 3}.

$\bullet$ Feature 1: Classification of MSS

$\bullet$ Feature 2: The secret recovery phase can be performed independently

$\bullet$ Feature 3: The recovered secrets will reveal the information of the remaining secrets

$\bullet$ Feature 4: Multiple secrets can be restored

$\bullet$ Feature 5: Every participant merely keeps one share

$\bullet$ Feature 6: The scheme is a multi-use secret sharing scheme

$\bullet$ Feature 7: The size of every share or shadow is the same as every secret

$\bullet$ Feature 8: Only one secret can be reconstructed in one stage

$\bullet$ Feature 9: The shares provided by the dealer can be verified

$\bullet$ Feature 10: The correctness of recovered secrets can be verified

$\bullet$ Feature 11: The secret sharing scheme is whether post-quantum or not

$\bullet$ Feature 12: Have different ways for recovery

In order to save space, as for the feature 1, we use PO to represent MSSSPO, and AO to represent MSSSAO. Besides, Y denotes YES, and N denotes NO.

\begin{table}[!htbp]\small
\caption{Performance\, feature of proposed schemes}
\label{tab:1}
\scalebox{1.25}[1.05]{
\begin{tabular}{|c|c|c|c|c|c|c|}
\hline
Feature  & HD\cite{HD94} & LH\cite{H95} & SM\cite{M162} &  MD\cite{DM15} & PE\cite{PE17} & OS \\
\hline
1 & PO & PO & PO & SMSS & AO & AO \\
\hline
2 & N & N & N & - & Y & Y \\
\hline
3  & Y & Y & Y & - & N & N \\
\hline
4 & Y & Y & Y & Y & Y & Y \\
\hline
5 & Y & Y & Y & Y& Y & Y\\
\hline
6 & N & N & Y & Y & Y & Y \\
\hline
7 & Y & Y & Y & Y & Y & Y \\
\hline
8 & Y & Y & Y & - & Y & Y \\
\hline
9 & N & N & N & Y& Y & Y \\
\hline
10 & N & N & N & N & Y & Y \\
\hline
11 & N & N & N & N & Y & Y \\
\hline
12 & N & N & Y & Y & N & Y \\
\hline
\end{tabular}}
\end{table}

From TABLE \uppercase\expandafter{\romannumeral 3}, according to different categories of these schemes, we know that the first three schemes can restore secrets in a predefined order, the last two schemes can restore secrets in any order, and the MD \cite{DM15} schemes can restore all secrets at the same time. Therefore, only the last two schemes can restore secrets independently, for the reason that in the first three schemes, the recovered secret will reveal some information of the remaining secrets, however the last two schemes do not have this drawback. As for the MD scheme \cite{DM15}, it restores all secrets simultaneously, so we do not consider the order of its secret recovery.

Notice that all these schemes mentioned here are multi-secret sharing schemes. However, these participants only need keep one share to recover all secrets. Furthermore, if their shares are masked in the recovery phase, the shares can be utilized more than one time and these schemes become multi-use secret sharing schemes. So, in the first two schemes, the shares are not reusable. Whereas, in our schemes, if we want to share a new set of secrets, what we need to do is only to release new matrices $G_{i}\in \mathbb{F}_{q}^{t_{i} \times r}$ for $i=1,2,\cdots,k$, which does not need the secure channel between the dealer and the participants. Then we can obtain new shadows using the original shares and restore the new secrets by the ways mentioned in the recovery phase of our schemes. What's more, the feature 7 means that these schemes are all ideal threshold secret sharing schemes. In addition, all the schemes can recover only one secret in one stage except the scheme \cite{DM15}, which is because this scheme reconstruct all secrets in only one stage.

As for the features 9 and 10, only PE \cite{PE17} and our schemes can not only detect the cheat from the dealer, but also detect the dishonest participants. More importantly, the last two schemes are lattice-based secret sharing schemes, which means that they can resist the attack by quantum computation.

When it comes to the feature 12, it seems that our schemes have the same advantages as \cite{M162} and \cite{DM15}. Nevertheless, we provide a general form of the $ILR$ relations used in \cite{M162} and \cite{DM15}. In fact, we can use an $ILR$ relation with degree $t-1$ to substitute the $ILR$ relation used in our presented schemes. Then, we shouldn't publish corresponding $c_{i}$ or $c$ to make our schemes in every stage still a $(t,n)$ threshold secret sharing schemes. Because $c_{i}$ or $c$ can be computed from $t$ terms of the $ILR$ relation with degree $t$. However, in order to keep the consistence between the threshold of scheme in every stage and corresponding degree of $ILR$ relation, we choose to use a $t$-degree $ILR$ relation. Meanwhile, we publish related $c_{i}$ or $c$ so that any at least $t$ values can reconstruct the shared secrets by running the algorithm proposed in our schemes. Similarly, we can also use an $ILR$ relation with degree $t+1$ in our schemes, but we need release more public values.

On the other hand, although the SM \cite{M162}, MD \cite{DM15} and our schemes belong to different categories, our schemes can be changed into the other two types of MSSs easily through simple modifications. For example, if we want to turn our MSSSAOs into MSSSPOs, we can use the model proposed in Section 2.2 of \cite{M162}. Similarly, if we want to turn our MSSSAOs into SMSSs, we can use the way proposed in \cite{YF201}. Accordingly, using our method, we can construct a series of different types of MSSs based on different security requirements and application scenarios, which means that the method proposed in our schemes can be also applied to MSSSPOs and SMSSs.

Therefore, our proposed schemes are better than the schemes mentioned above.

\section{Conclusion}

In this paper, we utilize the inhomogeneous linear recursion and Ajtai's function to construct four verifiable post-quantum multi-stage secret sharing schemes with recovering secrets in any order.

At first, we conclude a general formula of some $ILR$ relations used in the literature, and divide them into two types, i.e., Type-$t$ and Type-$l$ $ILR$ relations. Then, we use special cases of $ILR$ relations and Ajtai's function to construct four different MSSSAOs. Later, we give the security analysis from three aspects, i.e., correctness, verifiability and privacy. Importantly, we prove that the recovered secrets cannot leak any information of the unrecovered secrets in our four schemes with computational security. Besides, we show that any number of participants less than the threshold cannot restore unrevealed secret.

Compared with some known schemes, our schemes can verify the authenticity of both shares and recovered secrets. Meanwhile, the shares are reusable in our schemes. Moreover, we have several ways to recover the secrets due to the properties of $ILR$ relations. Since the universality of our method, our schemes can be easily changed into SMSSs and MSSSPOs, which means that we can use our method to construct a series of different types of novel MSSs readily. Although our schemes need more memory consumption than some known schemes, we need much less time consumption. Notice that the first two schemes we proposed are better than the last two schemes from the perspective of the memory consumption. Furthermore, because we utilize Ajtai's function, a collision resistant lattice-based function, our four schemes can resist the attack from quantum computation, which can be applied in more scenarios in the future.



\end{document}